\newtheorem{assumption}{Assumption}
\newcommand{\prob}{{\textbf{Min-SUKP}}}
\newcommand{\KP}{{\textbf{KP}}}
\newcommand{\SKP}{{\textbf{SKP}}}
\newcommand{\MinKP}{{\textbf{Min-KP}}}
\newcommand{\SKCC}{{\textbf{SK-CC}}}
\newcommand{\UKP}{{\textbf{UKP}}}
\begin{document}
	\title{An FPTAS for Stochastic Unbounded Min-Knapsack Problem}
	%
	%
	\author{Zhihao Jiang\inst{}\orcidID{0000-0002-9682-2476} \and
		Haoyu Zhao\inst{}\orcidID{0000-0001-6775-1421}}
	\authorrunning{Z. Jiang \and H. Zhao}
	%
	\institute{Institute for Interdisciplinary Information Sciences, Tsinghua University, China
		\email{\{jzh16,zhaohy16\}@mails.tsinghua.edu.cn}\\
	}
	\maketitle              
	\begin{abstract}
		In this paper, we study the stochastic unbounded min-knapsack problem (\prob).
		The ordinary unbounded min-knapsack problem states that:
		There are $n$ types of items, and there is an infinite number of items of each type. The items of the same type have the same cost and weight. We want to choose a set of items such that the total weight is at least $W$ and the total cost is minimized.
		The \prob~generalizes the ordinary unbounded min-knapsack problem to the stochastic setting, where the weight of each item is a random variable following a known distribution and the items of the same type follow the same weight distribution. In \prob, different types of items may have different cost and weight distributions.
		In this paper, we provide an FPTAS for \prob, i.e., the approximate value our algorithm computes is at most $(1+\epsilon)$ times the optimum, and our algorithm runs in $poly(1/\epsilon,n,\log W)$ time.
		
		\keywords{Stochastic Knapsack, Renewal Decision Problem, Approximation Algorithms}
	\end{abstract}

	\section{Introduction}
	In this paper, we study the stochastic unbounded min-knapsack problem (\prob).
	The problem is motivated by the following renewal decision problems introduced in \cite{RN29}.
	A system (e.g., a motor vehicle) must operate for $t$ units of time.
	A particular component (e.g., a battery) is essential for its operation and must be replaced each time it fails. There are $n$ different types of replacement components,
	and every kind of items has infinite supplies.
	A type $i$ replacement costs $C_i$ and has a random lifetime with distribution depending on $i$. The problem is to assign the initial component and subsequent replacements from among the types to minimize the total expected cost of providing an operative component for the $t$ units of time.
	Formally, we would like to solve the following \prob\ problem, defined as follows:
	
	\begin{problem}[stochastic unbounded min-knapsack]\label{pro-1}
		There are $n$ types of items $a_1,a_2,\dots,a_n$. For an item of type $a_i$, the cost is a deterministic value $c_i$, and the weight is random value $X_i$ which follows a known distribution $D_i$ with non-negative integer support. Let $D_i(j)$ denote $\text{Pr}\{X_i \le j\}$. Each type has infinite supplies, and the weight of each item is independent of the weight of the items of other types and other items of the same type. Besides, there is a knapsack with capacity $W$. Our objective is to insert items into the knapsack one by one until the total weight of items in the knapsack is at least $W$. The realized weight of an item is revealed to us as soon as it is inserted into the knapsack. What is the expected cost of the strategy that minimizes the expected total cost of the items we insert?
	\end{problem}
	
	\begin{remark}
		The above problem is the stochastic version of the ordinary unbounded min-knapsack problem.
		Comparing to the ordinary knapsack problem, there is an infinite number of items of each type, and the objective is to minimize the total cost
		(rather than maximize the total profit).
	\end{remark}
	
	\begin{remark}
	    It can be shown that \prob\ is NP-hard. In \cite{garey2002computers}, the authors mentioned that the unbounded knapsack problem (\UKP) is NP-hard, and it can be easily shown that the unbounded min-knapsack is NP-hard, since there is a polynomial reduction between these 2 problems. The problem \prob\ is NP-hard since it is a generalization of unbounded min-knapsack.
	\end{remark}
	
	Derman et al.\cite{RN29} discussed \prob~when the weight distributions of items are exponential and provided an exact algorithm to compute the optimal policy. Assaf \cite{RN39} discussed \prob~when the weight distributions of items have a common matrix phase type representation. 
	
	In this paper, we present a fully polynomial time approximation scheme (FPTAS) for this problem for general discrete distributions.
	
	Roughly speaking, we borrow the idea of the FPTAS for the knapsack problem
	and the method for computing the distribution of the sum of random variables \cite{RN14}.
	However, there are a few technical difficulties we need to handle.
	The outline of our algorithm is as follows.
	We first compute a constant factor approximation for the optimal cost (Section \ref{sec-a0}), and then we apply the discretization and a dynamic program based on the approximation value (Section \ref{sec-a1}).
	However, the dynamic program can only solve the problem in a restricted case where the cost for any item is `not too small' (the cost of each item is larger than a specific value). To solve the whole problem, we consider a reduction from the general setting to the restricted setting and show that the error of the reduction is negligible (Section \ref{sec-a2}).
	
	\subsection{Related Work}

	The knapsack problem is a classical problem in combinatorial optimization. The classical knapsack problem (max-knapsack problem) is the following problem: Given a set of items with sizes and costs, and a knapsack with a capacity, our goal is to select some items and maximize the total cost of selected items with the constraint that the total size of selected items does not exceed the capacity of the knapsack.
	
	The min-knapsack problem (\MinKP) \cite{RN35} is a natural variant of the ordinary knapsack problem. In the min-knapsack problem, the goal is to minimize the total cost of the selected items such that the total size of the selected items is not less than the capacity of the knapsack. Although the min-knapsack problem is similar to the max-knapsack problem, a polynomial-time approximation scheme (PTAS) for the max-knapsack problem does not directly lead to a PTAS for the min-knapsack problem. For the (deterministic) min-knapsack problem, approximation algorithms with constant factors are given in \cite{RN35,RN36,RN45}. Han and Makino \cite{RN34} considered an online version of min-knapsack, that is, the items are given one-by-one over time.
	
	There is also a line of work focusing on the FPTAS for unbounded knapsack problem(\UKP). \UKP\ is similar to the original 0-1 knapsack problem, except that there are infinite number of items of each type. The first FPTAS for \UKP\ is introduced by \cite{ibarra1975fast}, and they show an FPTAS by extending their FPTAS for 0-1 knapsack problem. Their algorithm runs in $O(n+\frac{1}{\epsilon^4}\log \frac{1}{\epsilon})$ time and needs $O(n+\frac{1}{\epsilon^3})$ space. Later, \cite{kellerer2004multidimensional} showed an FPTAS with time complexity $O(n\log n + \frac{1}{\epsilon^2}(n+\log \frac{1}{\epsilon}))$ and space complexity $O(n+\frac{1}{\epsilon^2})$. In 2018, \cite{jansen2018faster} presented an FPTAS that runs in $O(n_\frac{1}{\epsilon^2}\log^3\frac{1}{\epsilon})$ time and requires $O(n+\frac{1}{\epsilon}\log^2\frac{1}{\epsilon})$ space.
	
	However, in some applications, precisely knowing the size of each item is not realistic. In many real applications,
	we can only get the size distribution of a type of item.
	This problem leads to the stochastic knapsack problem (\SKP~\cite{ross1989stochastic}), which is a generalization of \KP. In \SKP, the cost of each item is deterministic, but the sizes of items are random variables with known distributions, and we get the realized size of an item as soon as it is inserted into the knapsack. The goal is to compute a solution policy which indicates the item we insert into the knapsack at a given remaining capacity. For the stochastic max-knapsack problem, an approximation with a constant factor was provided in the seminal work \cite{RN30}. The current best approximation ratio
	for \SKP\ is 2 \cite{Ex4,Ma2014}. An $(1+\epsilon)$ approximation with relaxed capacity (bi-criterion PTAS) is given in \cite{RN31,RN33}.
	Besides, Deshpande et al.\cite{RN47} gave a constant-factor approximation algorithm for the stochastic min-knapsack.
	
	Gupta et al.\cite{Ex2} considered a generalization of \SKP, where the cost of items may be correlated, and we can cancel an item during its execution in the policy. Cancelling an item means we can set a bounding size each time we want to insert an item, we cancel the item if the realized size of the item is larger than the bounding size. When we cancel an item, the size of the item is equal to the bounding size, and the cost of the item is zero. This generalization is referred to as Stochastic Knapsack with Correlated Rewards and Cancellations (\SKCC). Gupta et al.\cite{Ex2} gave a constant-factor approximation for \SKCC~based on LP relaxation.
	A bicriterion PTAS for \SKCC\ is provided in \cite{RN33}.
	
	\subsection{Preliminary}\label{sec-pre}

	\begin{proposition}\label{positive-support-assumption}
		Without the loss of generality, we can assume that the support of $D_i$, which is the weight distribution of an item of type $i$, has positive integer support.
	\end{proposition}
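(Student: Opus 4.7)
The plan is to transform any instance of \prob\ whose weight distributions have non-negative integer support into an equivalent instance whose weight distributions have positive integer support, by conditioning out the zero-weight realizations. For each type $i$, let $p_i = D_i(0) = \Pr[X_i = 0]$. If $p_i = 1$, inserting an item of type $i$ never decreases the remaining capacity, so no optimal policy would ever use it, and the type can be deleted from the instance. I therefore assume $p_i < 1$ for every type $i$.

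Next, I would define a transformed instance with the same capacity $W$ in which each type $i$ has deterministic cost $c_i' = c_i / (1 - p_i)$ and weight distribution $D_i'$ given by $\Pr[X_i' = k] = \Pr[X_i = k]/(1 - p_i)$ for every integer $k \geq 1$. Since $p_i < 1$, the new distribution $D_i'$ is a well-defined probability distribution supported on the positive integers.

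The key step is to show that the two instances have the same optimal expected cost. A policy for \prob\ is a (stationary, deterministic, by a standard argument) rule assigning to each remaining capacity $w > 0$ a type to insert. If the policy chooses type $i$ at state $w$ and the realized weight is $0$, the state is unchanged, so optimality forces the policy to choose type $i$ again. Grouping consecutive insertions of the same type into a single \emph{super-step}, the number of insertions in a super-step is geometric with success probability $1 - p_i$, so its expected cost equals $c_i / (1 - p_i) = c_i'$, and the weight gained in the super-step is distributed exactly as $D_i'$. This establishes a cost-preserving correspondence between policies for the original and transformed instances, so their optimal expected costs coincide.

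The main obstacle is justifying that it suffices to restrict to stationary deterministic policies and that the state really does not change on a zero-weight realization; both are immediate from the Markov structure of the problem, since the remaining capacity is the only relevant state variable and the weights of successive insertions are independent. Hence any FPTAS for the positive-support case immediately yields one for the general case, and we may assume positive integer support throughout the remainder of the paper.
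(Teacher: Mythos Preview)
Your proposal is correct and follows essentially the same approach as the paper: replace each type $i$ by a new type with cost $c_i/(1-d_i(0))$ and weight distribution obtained by conditioning $X_i$ on $X_i>0$, justified by the observation that a zero-weight insertion leaves the state unchanged so the optimal policy repeats the same choice. Your write-up is in fact slightly more complete than the paper's, since you explicitly handle the degenerate case $p_i=1$ and spell out the geometric-sum argument for the expected cost of a super-step.
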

	
	We skip the proof of Proposition \ref{positive-support-assumption}. Please see the proof in Appendix \ref{proof-positive-sup}.
	
	From now on, we can suppose that each type of item has weight distribution with positive integer support.

	In \prob, the optimal item added can be determined by the remaining capacity. Let $OPT_w$ denote the expected cost of the optimal strategy when the remaining size is $w$. We can assume that the support of $D_i$ is $\{0,1,\dots,W\}$. Let $OPT_0 = OPT_{-1} = \cdots = OPT_{-W+1} = 0$. Define $d_i(j) = D_i(j) - D_i(j-1) = \text{Pr}\{X_i = j\}$. From the dynamic program, we have pseudo-polynomial time Algorithm \ref{alg-0} that can compute the exact optimal value.
	
	
	\begin{algorithm}[!ht]
		\caption{Pseudo-polynomial Time Algorithm}\label{alg-0}
		\begin{algorithmic}[1] 
			\State {$OPT_{i} \leftarrow 0$ for $-W+1\leq i\leq 0$}
			\For{$i=1\rightarrow W$}
			\State $OPT_i=\min_{j=1}^{n}\left(c_j+\sum_{k=1}^{W}d_j(k)\cdot OPT_{i-k}\right)$
			\EndFor
			\Return $OPT_W$
		\end{algorithmic}
	\end{algorithm}
	
	
	Algorithm \ref{alg-0} runs in $\textbf{poly}(n,W)$ time.
	
	In this paper, we show an FPTAS to compute $OPT_W$. Our algorithm runs in $\textbf{poly}(\frac{1}{\epsilon},n,\log W)$ time and return $OPT'_W$, which is an approximation for $OPT_W$, such that $(1-\epsilon)OPT_W \le OPT'_W \le (1+\epsilon)OPT_W$. We assume that there is an oracle $\mathcal A$ such that we can call $\mathcal A$ to get $D_i(j) = \text{Pr}\{X_i\le j\}$. Since we require that our algorithm runs in $\textbf{poly}(\frac{1}{\epsilon},n,\log W)$ time, our algorithm can call the oracle for at most $\textbf{poly}(\frac{1}{\epsilon},n,\log W)$ times.

	\section{A Constant Factor Estimation}\label{sec-a0}
	In this section, we show that there is a constant factor approximation for the optimal value. This constant factor approximation serves to estimate the optimal value roughly, and our FPTAS uses the standard discretization technique based on this rough estimation.
	
	Define $b_i=\frac{c_i}{E[X_i]}$. When we insert an item of type $i$, the expected weight is $E[X_i]$, and the cost is $c_i=b_iE[X_i]$. Suppose $m=\arg\min_i b_i$, and we will show that $2b_mW$ is a constant approximation for the optimal value $OPT_W$. Formally, we have the following lemma,
	\begin{lemma}\label{constapprox}
		For all $-W+1\le w\le W,~b_{m}w\le OPT_w \le b_{m}(w+W)$, where $m=\arg\min_i b_i$.
	\end{lemma}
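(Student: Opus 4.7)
The plan is to prove both inequalities by induction on $w$, using the dynamic programming recurrence from Algorithm \ref{alg-0}. First, for $-W+1 \le w \le 0$ both inequalities are immediate: the lower bound reads $b_m w \le 0 = OPT_w$, which holds since $b_m \ge 0$ and $w \le 0$, while the upper bound reads $OPT_w = 0 \le b_m(w+W)$, which holds since $w + W \ge 1$. So the real work is only for $1 \le w \le W$, and the trivial regime $w \le 0$ serves as the base of the induction.

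For the upper bound, I would consider the (suboptimal) stationary strategy that always inserts an item of type $m$. Its expected cost $f_w$ satisfies the one-type recurrence $f_w = c_m + \sum_{k=1}^{W} d_m(k)\, f_{w-k}$ with $f_w = 0$ for $w \le 0$, and obviously $OPT_w \le f_w$. Assuming inductively that $f_{w'} \le b_m(w' + W)$ for every smaller $w'$ in the range, and using $c_m = b_m E[X_m]$ together with $\sum_k k\, d_m(k) = E[X_m]$, a direct substitution telescopes:
\[ f_w \le c_m + \sum_k d_m(k)\, b_m(w - k + W) = b_m E[X_m] + b_m(w+W) - b_m E[X_m] = b_m(w+W). \]

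For the lower bound, I would again induct on $w$, now using the full recurrence $OPT_w = \min_j\bigl(c_j + \sum_k d_j(k)\, OPT_{w-k}\bigr)$. Assuming $OPT_{w'} \ge b_m w'$ for every smaller $w'$ in the range, each term in the minimum satisfies
\[ c_j + \sum_k d_j(k)\, OPT_{w-k} \ge c_j + b_m \sum_k d_j(k)(w-k) = b_j E[X_j] + b_m w - b_m E[X_j] = b_m w + (b_j - b_m)\, E[X_j], \]
and the factor $b_j - b_m$ is non-negative by the choice of $m = \arg\min_i b_i$, so taking the minimum over $j$ yields $OPT_w \ge b_m w$.

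There is no substantive obstacle here; the only point requiring care is that for $k$ close to $W$ the index $w-k$ can become non-positive, so one needs the inductive hypothesis to remain valid there. This is precisely why the lemma is stated over the extended range $-W+1 \le w \le W$ rather than just $w \ge 1$: the boundary values anchor the induction through the two trivial inequalities noted at the outset, and they match exactly the largest possible single-item weight.
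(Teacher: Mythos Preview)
Your proof is correct and follows essentially the same inductive argument as the paper: both verify the base case on $-W+1\le w\le 0$, obtain the upper bound by plugging in type $m$ and telescoping $c_m=b_m E[X_m]$ against $\sum_k k\,d_m(k)=E[X_m]$, and obtain the lower bound by applying the hypothesis inside the minimum and using $b_j\ge b_m$. The only cosmetic difference is that you name the always-$m$ strategy and induct on its value $f_w$, whereas the paper bounds $OPT_w$ directly via the recurrence; the computations are identical.
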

	
	This lemma can be proved by induction, and please see Appendix \ref{proof-constapprox} for its formal proof.
	
	Specifically, when $w=W$, we get $b_{m}W\le OPT_W\le 2b_{m}W$ directly from the above lemma. However, when computing $b_m$, we need to enumerate the support. To avoid expensive enumeration, we can compute $E[X_i]$ approximatively. We round the realized weight $x_i$ into $2^{\lfloor\log_2{x_i}\rfloor}$. Just let
	\[ \overline{E}[X_i] = \sum_{j=1}^{W}d_i(j)2^{\lfloor\log_2{j}\rfloor} = D_i(1) + \sum_{j=0}^{\lfloor\log W\rfloor} \left( 2^{j}\cdot (D_i(2^{j+1})-D_i(2^{j})) \right). \]
	
	We have $ \frac{{E}[X_i]}{2} \leq \overline{E}[X_i] \leq {E}[X_i]$, since $\frac{x_i}{2}\leq 2^{\lfloor\log_2{x_i}\rfloor}\leq x_i$.

	Let $\overline{OPT}_W=2W\cdot \min_{i}\frac{c_i}{\overline{E}[X_i]}$.
	From the previous argument, we have $2b_{m}W\le \overline{OPT}_{W}\le 4b_{m}W$, which means $OPT_{W}\leq \overline{OPT}_{W}\leq 4OPT_{W}$.
	
	Let $T=\frac{1}{4}\overline{OPT}_{W}$. We have $\frac{1}{4}OPT_{W}\leq T\leq OPT_{W}$. $T$ is the estimation of $OPT_{W}$.
	
	\section{FPTAS Under Certain Assumption}\label{sec-a1}
	
	In this section, we discuss \prob~under the following assumption.
	
	\begin{definition}[Cheap/Expensive type]
	
	Let $\theta= \frac{\epsilon}{10n}$. We call type $i$ is an expensive type if $c_i\ge \theta T$, otherwise we call type $i$ is a cheap type.
	
	\end{definition}
	
	\begin{assumption}\label{assm}
		we assume all the types are expensive.
	\end{assumption}
	
	And we give an algorithm with approximation error at most $\epsilon T$ in this section under Assumption \ref{assm}.
	
	In general, our algorithm for \prob~is inspired from the FPTAS of the ordinary knapsack problem \cite{RN48}. We define $\hat{f}_c=\max\{w|OPT_w\le c\}$, and compute the approximation for $\hat f$. However, the support of $\hat f$ is the set of real numbers. So we discretize $\hat f$ and only compute the approximation for $\hat f_{i\delta T}$ for all $i \le \lceil\frac{1}{\delta}\rceil + 1$, where $i$ is non-negative integer and $\delta = \frac{\epsilon^2}{100n}$. In our algorithm, we use dynamic programming to compute $f_i$, which is the approximation for $\hat f_{i\delta T}$. Then we use $f_i$ to get an approximate value of $OPT_W$. Since $\hat f_{i\delta T}$ is monotonically increasing with respect to $i$, we can find the smallest $i$ such that $f_i \geq W$ and return the value $i\delta T$ as the approximate value of $OPT_W$.
	
	Now we show how to compute $f_i$. First, suppose that $\hat f_{i\delta T} = w^*$, and from the dynamic programming, we have
	\[
	OPT_{w^*} = \min_{k}\left\{c_k + \sum_{j=1}^W d_k(w^*-j)OPT_{j}\right\}.
	\]
	Since $OPT_{w}$ is non-decreasing while $w$ is increasing, recall $\hat{f}_c=\max\{w|OPT_w\le c\}$, and we get,
	\begin{align*}
	w^* =& \max\left\{w'\Bigg{|}\exists k, c_k + \sum_{j=1}^W d_k(w'-j)OPT_{j} \le i\delta T\right\}\\
	=& \max\left\{w'\Bigg{|}\exists k, c_k + \sum_{j=1}^{w'-1} d_k(w'-j)OPT_{j} \le i\delta T\right\}.
	\end{align*}
	
	Define $\hat g_w:=j\delta T$ for all $\hat f_{j-1}< w \leq \hat f_{j}$. Then $\hat g_w$ is the rounding up discretization value of $OPT_w$, and we can approximately compute $w^*$ (let $\hat w$ denote the approximate value) by
	\[\hat w=\max \left\{ w' ~\Bigg{|}~
	\exists k, \left( c_k+\sum_{j=1}^{w'-1} d_k(w'-j) \hat g_{j} \right)
	\leq i\delta T  \right\}.\]
	However, we do not have $\hat g$ during the computation. Instead, we use the following quantity to approximate $\hat g$. Given $f_0,\dots,f_{i-1}$, define $g_w:=j\delta T$ for all $f_{j-1}< w \leq f_{j}$ where $j\le i-1$, and define $g_w = i\delta T$ for all $w>f_{i-1}$. Then we have
	\begin{align}\label{g-to-f}
	f_i=\max \left\{ w' ~\Bigg{|}~
	\exists k, \left( c_k+\sum_{j=1}^{w'-1} d_k(w'-j) g_{j} \right)
	\leq i\delta T  \right\}.
	\end{align}
	
	\begin{remark}
		When we compute $f_i$, we have already gotten $f_0,f_1,\dots f_{i-1}$.
	\end{remark}
	
	To compute $f_i$, we use binary search to guess $f_i = w'$ and accept the largest $w'$ that satisfies the constraint in (\ref{g-to-f}).
	
	The pseudo-code of our algorithm is shown in Algorithm \ref{alg-1}. The detailed version of the pseudo-codes is presented in Appendix \ref{full-pseudo-code}.
	
	In details, we enumerate $i$ and compute $f_i$ until $f_i$ reaches the weight lower limit $W$. To compute $f_i$, we use binary search starting with $L=0,R=W$. In each step of binary search, let $w=(L+R)/2$ and compute $g_w$, and decide to recur in which half according to the relation between $g_w$ and $i\delta T$, until $L=R$ which means $f_i=L=R$.
	
	\begin{algorithm}[!t]
		\caption{The Dynamic Program for Computing approximate answer for \prob~under Assumption \ref{assm}}\label{alg-1}
		\begin{algorithmic}[1] 
			\State Let $\delta = \frac{\epsilon^2}{100n}$.
			\State $f_{0}\leftarrow 0$
			\For{$i=1\rightarrow \lceil\frac{1}{\delta}\rceil+1$}
			\State Compute $f_i$ using binary search according to Algorithm \ref{compute-f}
			\If{$f_i\geq W$}
			\State \Return $\hat{V}:=i\delta T$
			\EndIf
			\EndFor
		\end{algorithmic}
	\end{algorithm}
	\begin{algorithm}[!t]
		\caption{Given $w$, judge whether $f_i\geq w$ (whether $g_{w}\leq i\delta T$)}\label{compute-f}
		\begin{algorithmic}[1] 
			\For{$j=1\rightarrow n$}
			\For{$m=0\rightarrow i-2$}
			\State $P_m=Pr\left[X_j\in [w-f_{m+1},w-f_{m}) \right]$ \Comment by Binary search from oracle
			\EndFor
			\State $P_{i-1}=Pr\left[X_j\in [1,w-f_{i-1}) \right]$
			\State $g_w \leftarrow c_j+\sum_{m=0}^{i-1}P_m (m+1) \delta T$ \Comment Equation (\ref{g-to-f})
			\If{$g_w\leq i \delta T$}
			\Return true
			\EndIf
			\EndFor
			\Return false
		\end{algorithmic}
	\end{algorithm}
	
	To quantify the approximation error by algorithm \ref{alg-1}, we have the following theorem.
	
	\begin{theorem}\label{dp-err}
		
		The output $\hat{V}$ of Algorithm \ref{alg-1} satisfies $ (1-\delta)(1-\frac{\epsilon}{10})\hat{V} \leq OPT_{W} \leq \hat{V} $.
		
	\end{theorem}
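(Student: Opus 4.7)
The plan is to establish the two inequalities separately, with the inductive claim $f_i \leq \hat{f}_{i\delta T}$ driving the upper bound and a multiplicative error bound on $g$ driving the lower bound.

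For the upper bound $OPT_W \leq \hat{V}$, I will show by induction on $i$ that $f_i \leq \hat{f}_{i\delta T}$; applying this at $i = \hat{i}$ gives $OPT_W \leq \hat{i}\delta T = \hat{V}$. The key observation is that for each fixed $k$, $\phi_k(v) := c_k + \sum_{j=1}^{v-1} d_k(v-j) g_j$ is non-decreasing in $v$ (both the probability mass of $\{X_k < v\}$ and the argument of $g$ grow with $v$), so the acceptance set $\{v : \min_k \phi_k(v) \leq i\delta T\}$ is an interval $[0, f_i]$. Assuming for contradiction $f_i > \hat{f}_{i\delta T}$, pick $v^* = \hat{f}_{i\delta T} + 1 \leq f_i$: some $k$ achieves $\phi_k(v^*) \leq i\delta T$, and for every $j' < v^* \leq \hat{f}_{i\delta T}+1$ the induction together with the definition of $g$ gives $g_{j'} \geq OPT_{j'}$ (either $j' \leq f_{i-1}$, where the induction applies, or $f_{i-1} < j' \leq \hat{f}_{i\delta T}$, where $g_{j'} = i\delta T \geq OPT_{j'}$). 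Hence $OPT_{v^*} \leq \phi_k(v^*) \leq i\delta T$, contradicting $v^* > \hat{f}_{i\delta T}$.

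For the lower bound, the heart is the multiplicative error bound $g_j \leq (1+\epsilon/10) OPT_j$ for all $j \leq W$. I first prove the additive bound $g_j \leq OPT_j + N_j\delta T$ by induction on $j$, where $N_j$ is the expected number of items used by the optimal policy starting at capacity $j$. Let $k^*$ be the type chosen by the optimal policy at state $j$, and let $l_0$ be the smallest $l$ such that $c_{k^*} + E[g^{(l)}_{j - X_{k^*}}] \leq l\delta T$, writing $g^{(l)}$ for the $g$-function during iteration $l$. Then $g_j \leq l_0 \delta T$ (since $l_0$ satisfies the acceptance test via $k^*$), and by minimality of $l_0$ combined with the pointwise monotonicity $g^{(l)} \leq g$, we have $(l_0 - 1)\delta T < c_{k^*} + E[g^{(l_0-1)}_{j - X_{k^*}}] \leq c_{k^*} + E[g_{j - X_{k^*}}]$. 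Combining these yields the Bellman-style inequality $g_j \leq c_{k^*} + E[g_{j - X_{k^*}}] + \delta T$; unrolling it along the optimal policy tree gives $g_j \leq OPT_j + N_j \delta T$. Under Assumption \ref{assm}, each item costs at least $\theta T$, so $N_j \leq OPT_j / (\theta T)$ and hence $g_j \leq OPT_j(1 + \delta/\theta) = (1+\epsilon/10) OPT_j$.

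With the multiplicative bound in hand, observe that by construction $f_{\hat{i}-1} < W$, so the binary search at iteration $\hat{i}-1$ rejected $w' = f_{\hat{i}-1}+1$: $\phi_{k^*}^{(\hat{i}-1)}(f_{\hat{i}-1}+1) > (\hat{i}-1)\delta T$ for the optimal $k^*$ of $OPT_{f_{\hat{i}-1}+1}$. The multiplicative bound gives $(1+\epsilon/10) OPT_{f_{\hat{i}-1}+1} \geq \phi_{k^*}^{(\hat{i}-1)}(f_{\hat{i}-1}+1) > (\hat{i}-1)\delta T$, hence $OPT_W \geq OPT_{f_{\hat{i}-1}+1} > (1-\epsilon/10)(\hat{i}-1)\delta T$. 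Since $\hat{V} \geq OPT_W \geq T$ forces $\hat{i} \geq 1/\delta$, we get $(\hat{i}-1)/\hat{i} \geq 1-\delta$, yielding $OPT_W \geq (1-\delta)(1-\epsilon/10) \hat{V}$. The main obstacle will be the Bellman-style inequality for $g_j$: verifying it requires both the pointwise monotonicity $g^{(l)} \leq g$ in $l$ and the observation that at iteration $l_0-1$ the type $k^*$ was just short of the threshold $(l_0-1)\delta T$, without which the local-to-global unrolling to $OPT_j + N_j\delta T$ does not close.
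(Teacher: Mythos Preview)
Your argument is correct. The upper bound is essentially the paper's Lemma~\ref{cor-dp-1}: both establish $f_i\le \hat f_{i\delta T}$ by induction on $i$ via the pointwise inequality $g_{j'}\ge OPT_{j'}$, though you phrase it as a contradiction while the paper proves the pointwise bound first and then reads off $f_i\le \hat f_{i\delta T}$.

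For the lower bound you take a genuinely different route. The paper's Lemma~\ref{cor-dp2} runs an induction on the \emph{cost level} $i$, proving $OPT_{f_i+1}>(1-\delta/\theta)\,i\delta T$ through the intermediate inequality $OPT_w\ge (1-\delta/\theta)g_w-\delta T$ and absorbing the additive $\delta T$ using $c_k\ge\theta T$ directly inside the induction step. You instead derive a one-step Bellman inequality
\[
g_j \;\le\; c_{k^*}+E[g_{j-X_{k^*}}]+\delta T
\]
from the rejection of $j$ at iteration $l_0-1$, unroll it along the optimal decision tree to obtain $g_j\le OPT_j+N_j\,\delta T$, and then convert to the multiplicative form via $N_j\le OPT_j/(\theta T)$. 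This is slightly more conceptual: it makes explicit that the discretization error is at most $\delta T$ per item the optimal policy uses, and Assumption~\ref{assm} is invoked only once, at the very end, rather than inside each induction step. In fact your bound $g_j\le (1+\epsilon/10)OPT_j$ applied directly at $j=W$ already gives $\hat V=g_W\le (1+\epsilon/10)OPT_W$, which is sharper than the stated $(1-\delta)(1-\epsilon/10)\hat V\le OPT_W$; your detour through $f_{\hat i-1}+1$ and the factor $(1-\delta)$ is not actually needed. The paper's induction on $i$ is more mechanical but avoids introducing the auxiliary quantity $N_j$.
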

	
	Generally speaking, this results can be shown in 2 steps: First, we will show that the real optimal value is upper bounded by the value computed in our algorithm, and next, we will show that under Assumption \ref{assm}, the difference between the value computed in Algorithm \ref{alg-1} and the real optimal value is upper bounded by a small value. Given these two results, we can prove Theorem \ref{dp-err}. Please see Appendix \ref{proof-dp-err} for the formal proof of Theorem \ref{dp-err}.
	
	From the above theorem, we know that the output $\hat{V}$ of Algorithm \ref{alg-1} is a $(1+\epsilon)$-approximation for $OPT_{W}$.

	\section{FPTAS in the General Case}\label{sec-a2}
	In the previous section, we show that there is an FPTAS of \prob~under Assumption \ref{assm} (when all the types are expensive). In this section, we remove Assumption \ref{assm} and show that there is an FPTAS of \prob. We will first present the general idea of our algorithm.
	
	\textbf{Our Ideas:} If we use the algorithm in the last section to compute in general case, the error will not be bounded. The key reason is that we may insert lots of items of cheap types. One idea is, we can bundle lots of items in the same cheap type $p$ into bigger items (an induced type $p'$), such that $p'$ is expensive. Then we replace type $p$ by the new type $p'$. Now, we can use the algorithm in the last section. However, we can only use bundled items even if we only want to use one item of a certain cheap item. Luckily, using some extra items of cheap items does not weaken the policy very much.

	The remaining problem is, how to compute the distribution of many items of type $p$? For example, we always use $e_p = 2^k$ items of type $p$ each time. We discretize the weight distribution $X_p$, and use doubling trick to compute the approximate distributions for $X_{p,1},\sum_{i=1}^2 X_{p,i},\sum_{i=1}^4 X_{p,i},\dots$ one by one, where $X_{p,i}$ are independent to each other and follow the same distribution of $X_p$. We can show that, using the approximation distributions in the computation will not lead to much error.
	
	\subsection{Adding Limitations to Strategy}
	For type $p$, if $c_p<\theta T$ ($\theta= \frac{\epsilon}{10n}$ as defined in the previous section), then there exists $e_p=2^{k_p},k_p\in Z$ such that $e_pc_p\in[\theta T, 2\theta T ]$. For convenience, if $c_p \ge \theta T$, we denote $e_p = 1$. We have the following restriction to the strategy.
	
	\begin{definition}[Restricted strategy]\label{restricted-strategy}
	A strategy is called restricted strategy, if for all type $p$, the total number of items of type $p$ we insert is always a multiple of $e_p$.
	\end{definition}
	
	If we know that for all type $p$, the total number of items of type $p$ is always a multiple of $e_p$, we hope that each time we use an item of type $p$, we will use $e_p$ of them together. This leads to the following definition.
	
	\begin{definition}[Block strategy]\label{block-strategy}
		A strategy is called block strategy, if we always insert a multiple of $e_p$ number of items of type $p$ together.
	\end{definition}
	
    The following theorem shows that, adding limitation to the strategy will not affect the optimal value too much.
	
	\begin{theorem}\label{bundle-err}
		Suppose the expected cost of the best block strategy is ${OPT}_{W}^{(b)}$, then
		$ OPT_{W} \leq  {OPT}_{W}^{(b)} \leq OPT_{W}+ \frac{\epsilon T}{5} $.
	\end{theorem}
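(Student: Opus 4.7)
The first inequality $OPT_W \le OPT_W^{(b)}$ is immediate, since every block strategy is itself a valid strategy for \prob, so the unconstrained optimum cannot exceed the optimum over the restricted subclass.

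For the second inequality, the plan is to exhibit a concrete block strategy whose expected cost is within $\epsilon T/5$ of $OPT_W$. Let $\pi^{*}$ be the optimal stationary policy for \prob\ (whose decisions depend only on the remaining capacity, as noted in Section \ref{sec-pre}), mapping each remaining capacity $w$ to a type $\pi^{*}(w)$. I would define the block strategy $\pi^{(b)}$ by the rule: at remaining capacity $w$, insert a block of $e_{\pi^{*}(w)}$ items of type $\pi^{*}(w)$. By Definition \ref{block-strategy} this is a legal block strategy, so bounding its expected cost bounds $OPT_W^{(b)}$.

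The heart of the argument is to show that the excess $E[\text{cost}(\pi^{(b)})] - OPT_W$ is controlled by one ``partial block'' per type. Specifically, for each type $p$ one expects at most $e_p - 1$ extra items beyond what $\pi^{*}$ would have used if it had been allowed to stop mid-block. Combined with the defining inequality $e_p c_p \le 2\theta T$ for cheap types (and $e_p = 1$ for expensive types, which contribute nothing), this would yield
\[
E[\text{cost}(\pi^{(b)})] - OPT_W \;\le\; \sum_p (e_p - 1) c_p \;<\; \sum_p e_p c_p \;\le\; n \cdot 2\theta T \;=\; 2n\cdot\tfrac{\epsilon}{10n}\cdot T \;=\; \tfrac{\epsilon T}{5},
\]
which is exactly the required bound.

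The main obstacle is justifying the ``at most $e_p - 1$ extra items per type'' claim cleanly, since the trajectories of $\pi^{*}$ and $\pi^{(b)}$ can diverge (they observe different intermediate remaining capacities and may therefore make different subsequent decisions). I would handle this by routing through the intermediate notion of a restricted strategy (Definition \ref{restricted-strategy}): first form $\pi^{(r)}$ by running $\pi^{*}$ item-by-item and, upon termination, padding each cheap-type count up to a multiple of $e_p$, so that $E[\text{cost}(\pi^{(r)})] \le OPT_W + \sum_p (e_p - 1) c_p$; then argue that because the policy depends only on remaining capacity and the total weight of a block equals the sum of the individual item weights, $\pi^{(r)}$ can be reorganized into a block strategy achieving the same expected cost. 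This reorganization step, together with the cost padding computation above, would yield $OPT_W^{(b)} \le OPT_W + \epsilon T/5$.
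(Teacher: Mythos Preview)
Your two-step plan---pad the optimal run of $\pi^{*}$ at termination so each type's count is a multiple of $e_p$, then reorganize the resulting restricted strategy into a block strategy at no extra cost---is exactly the paper's route (its Lemma~\ref{restrict-err} and Lemma~\ref{change-order}), and your arithmetic $\sum_p (e_p-1)c_p < n\cdot 2\theta T = \epsilon T/5$ matches.

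The one place your sketch needs tightening is the justification for the reorganization step: stationarity of $\pi^{*}$ and additivity of block weight do not by themselves justify it, since grouping items into blocks changes the intermediate remaining capacities and hence the subsequent decisions. The paper's argument is a swap: whenever you still owe a future item of type $p$ (because the final count must reach a multiple of $e_p$), insert it immediately but continue to make all remaining decisions \emph{as if you had not}; when the original restricted strategy would have inserted that type-$p$ item, skip it. This produces the same cost distribution and, applied repeatedly, shows the optimal restricted and optimal block expected costs coincide.
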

	
	Because of the space limitation, we will present the proof sketch below. For the formal proof of Theorem \ref{bundle-err}, please see Appendix \ref{bundle-err-proof}.
	
	\begin{proof}[Proof sketch] The proof Theorem \ref{bundle-err} is divided into 2 parts. The first part shows that the optimal value for the original problem does not differ much from the optimal value with restricted strategy (see Definition \ref{restricted-strategy}), and the second part shows that the optimal value with restricted strategy is the same as the optimal value with block strategy (see Definition \ref{block-strategy}). The first part is simple since we can add some item after following the optimal strategy in the original problem. The second part follows from the intuition that if we must use an item in the future, it is good to use it right now.
	\end{proof}
	
	
	
	\subsection{Computing the Summation Distribution of Many Items of the Same Type}
	In the last part, we define block strategy by adding a constraint to the ordinary strategy. And we find the expected cost of the optimal block strategy is close to that of the optimal strategy.
	
	The block strategies conform to Assumption \ref{assm} in Section \ref{sec-a1}. If we know the distribution of the total weight of $e_p$ items of type $p$, we can compute the approximate optimal expected cost by Algorithm \ref{alg-1}. In this part, we give an algorithm which approximately computes the distribution of the total weight of $e_p$ items of type $p$.
	
	Due to the space limitation, we present our algorithm in this section, and we put the analysis of our algorithm into the appendix (see Appendix \ref{proof-main-thm}). To present our idea, we need the following definitions.
	\begin{definition}[Distribution Array]
		For a random variable $X$ with positive integer support, we use $X[i]$ to denote the probability that $X \ge i$, i.e. $X[i] = \text{Pr}\{X \ge i\}$, and we use an array $\text{Dist}(X):=(X[1],X[2],\dots,X[W])$ to denote the distribution. We call $\text{Dist}(X)$ the distribution array of variable $X$.
	\end{definition}
	
	\begin{remark}
		From the definition, we know that $\text{Dist}(X)$ is a non-increasing array. Besides, in the definition, $\text{Dist}(X)$ has only $W$ elements since we only care $X[i]$ when $i\leq W$.
	\end{remark}
	
	\begin{definition}
		For any non-increasing array $D=(D_1,D_2,\dots,D_W)$ of length $W$, if $D_1\leq 1$ and $D_W\geq 0$, there is a random variable $X$ such that $Dist(X)=D$. We say that $X$ is the variable corresponding to distribution array $D$, denoted by $Var(D):=X$.
	\end{definition}
	
	Suppose $\{Y_i\}_{i\geq 1}$ are identical independent random variables with distribution array $\text{Dist}(X_p)$. Let $S_i$ denote $\sum_{j=1}^{i}Y_j$ and $\text{Dist}(S_i)$ denote the corresponding distribution array. We want to compute the distribution array of $S_{e_p}$ and we have the following equations,
	\begin{align}
	\text{Pr}\{S_{2i}=w\} &=\sum_{j=1}^{w-1}\left(\text{Pr}\{S_{i}=j\}\cdot \text{Pr}
	\{S_{i}=w-j\}\right),\forall 1\le w \le W,\label{con-pdf}\\
	S_{2i}[w] &=\text{Pr}\{S_i \ge w\} + \sum_{j=1}^{w-1}\left(\text{Pr}\{S_{i}=j\}\cdot \text{Pr}\{S_i \ge w-j\}\right) \\
	&= S_i[w] + \sum_{j=1}^{w-1}\left( \left(S_{i}[j]-S_{i}[j+1]\right)\cdot S_{i}[w-j]\right).\label{con-ccdf}
	\end{align}
	Note that $S_{2i}$ can be computed from $S_{i}$, so we only need to compute $S_1,S_2,S_4\dots,S_{e_p}$ successively (recall that $e_p=2^k_p$ where $k_p\in Z$). Note that $S_1$ could be got from the oracle.
	
	However, computing the exact distribution of $S_{2^j}$ is slow (needs at least $poly(W)$ time), so we compute an approximate value of $S_i$. To introduce our method which approximately computes the distribution, we need the following definitions.
	
	\begin{definition}[$\eta$-Approximate Array]
		Given a positive real number $\eta$, for distribution array $A=(a_1,a_2,\dots,a_m)$, define $A'=(a'_1,a'_2,\dots,a'_m)$ as the $\eta$-approximate array of $A$, where for all $i\in[m]$,
		\begin{align*}
		a'_i=
		(1+\eta)^{\lceil\log_{1+\eta}a_i\rceil},& a_i>(1+\eta)^{-\zeta}
		.
		\end{align*}
	\end{definition}
	
	\begin{definition}[$(\zeta,\eta)$-Approximate Array]
		Given positive real numbers $\zeta,\eta$, for distribution array $A=(a_1,a_2,\dots,a_m)$, define $A'=(a'_1,a'_2,\dots,a'_m)$ as the $(\zeta,\eta)$-approximate array of $A$, where for all $i\in[m]$,
		\begin{align*}
		a'_i=
		\begin{cases}
		(1+\eta)^{\lceil\log_{1+\eta}a_i\rceil},& a_i>(1+\eta)^{-\zeta}  \\
		(1+\eta)^{-\zeta},& a_i\leq (1+\eta)^{-\zeta}
		\end{cases}
		.
		\end{align*}
	\end{definition}
	
	\begin{definition}[$(\zeta,\eta)$-Approximation]
		For random variable $X$, suppose distribution array $D$ is $(\zeta,\eta)$-approximate array of $\text{Dist}(X)$. Define $Var(D)$ as the $(\zeta,\eta)$-approximation of $X$.
	\end{definition}

    \begin{remark}
    The $(\zeta,\eta)$-Approximation of a random variable is still a random variable. And for any random variable $X$ with integer support in $[1,W]$, the $(\zeta,\eta)$-approximation of $X$ has at most $\lceil\zeta\rceil$ different possible values.
    \end{remark}
	
	Let $\eta=\frac{\epsilon}{10\log{W}}$ and $\zeta=\log_{1+\eta}{\frac{W}{\eta}}$, and our algorithm is shown as following: We first compute $(\zeta,\eta)$-approximation of $S_1$ which is denoted by $B_1$. Then for all $i\in \left[\lceil\log_{e_p}\rceil\right]$, we compute the distribution array of $B'_{2^{i}}$, which is the summation of independent $B_{2^{i-1}}$ and $B_{2^{i-1}}$. Then we compute $B_{2^{i}}$ which is the $(\zeta,\eta)$-approximation for $B'_{2^{i}}$. Finally, we can get $B_{e_p}$ which is an approximate random variable of $S_{e_p}$.
	
	When we compute the summation of $B_{2^{i-1}}$ and $B_{2^{i-1}}$, as there are at most $O(\zeta)$ different values in $\text{Dist}(B_{2^{i-1}})$, there are at most $O(\zeta)$ values $w$ such that $\text{Pr}\{B_{2^{i-1}}=w\}>0$. Based on the previous argument, we can enumerate $w_1$ and $w_2$ such that $\text{Pr}\{B_{2^{i-1}}=w_1\}>0$ and $\text{Pr}\{B_{2^{i-1}}=w_2\}>0$. In the end, we sort each $\text{Pr}\{B_{2^{i-1}}=w_1\}\cdot \text{Pr}\{B_{2^{i-1}}=w_2\}$ by the value $w_1+w_2$ and arrange them to get the distribution array $\text{Dist}(B'_{2^{i}})$. This shows that we can compute the approximate distribution in $O(\zeta^2 \log{\zeta})$ time.
	
	Formally, we have Algorithm \ref{alg-2} to compute $B_{e_p}$.
	
	
	\begin{algorithm}[!ht]
		\caption{Computing Approximate Distribution of $S_{e_p}$}\label{alg-2}
		\begin{algorithmic}[1] 
			\State Let $\eta=\frac{\epsilon}{10\log{W}}$ and $\zeta=\log_{1+\eta}{\frac{W}{\eta}}$.
			\State Let $B_1$ be the $(\zeta,\eta)$-approximation for $S_1$. Compute $\text{Dist}(B_1)$ according to the oracle
			\For{$i=1\rightarrow \log_{2}{e_p}$}
			\State Let $B'_{2^i}$ be the summation of $B_{2^{i-1}}$ and $B_{2^{i-1}}$. Compute $\text{Dist}(B'_{2^i})$.
			\State Compute $\text{Dist}(B_{2^i})$, which is the $(\zeta,\eta)$-approximation for $B_{2^i}$.
			\EndFor
			\State \Return $\text{Dist}(B_{e_p})$
		\end{algorithmic}
	\end{algorithm}
	
	
	Before we state the main theorem that bounds the approximation error of our algorithm, we combine the full procedure and get our final Algorithm \ref{alg-3} for \prob.
	
	
	\begin{algorithm}[!ht]
		\caption{Algorithm for \prob}\label{alg-3}
		\begin{algorithmic}[1] 
			\State For each type $p$, let $S^p_{e_p}$ be the summation of $e_p$ i.i.d. variables with distribution $X_p$. Compute approximate distribution $Y_p$ of $S^p_{e_p}$ by Algorithm \ref{alg-2}.
			\State For each item type $p$, construct new type $p'$ with expected cost $c_pe_p$ and weight distribution $Y_p$.
			\State Let $W$ and all the new types be the input, and get return value $\hat{V}$ of Algorithm \ref{alg-1}.
			\State \Return $\hat{V}$.
		\end{algorithmic}
	\end{algorithm}
	
	
	Then, we have our main theorem, which discusses the approximation error of Algorithm \ref{alg-3}.
	\begin{theorem}\label{main-thm}
		The output $\hat{V}$ of Algorithm \ref{alg-3} satisfies
		\[(1-\epsilon)OPT_W\leq \hat{V}\leq (1+\epsilon)OPT_W.\]
	\end{theorem}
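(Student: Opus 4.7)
The plan is to decompose the total error of Algorithm \ref{alg-3} into three layers and bound each one separately, then chain them together. The three layers are: (i) the loss from restricting to block strategies, controlled by Theorem \ref{bundle-err}; (ii) the loss from using the approximate summation distribution $Y_p$ in place of the true distribution of $S^p_{e_p}$, produced by Algorithm \ref{alg-2}; and (iii) the dynamic programming loss of Algorithm \ref{alg-1} on the bundled instance, controlled by Theorem \ref{dp-err}.

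For layer (i), observe that a block strategy on the original instance is in one-to-one correspondence with an ordinary strategy on the \emph{exact bundled instance} whose types are $(c_p e_p, \text{Dist}(S^p_{e_p}))$; let $OPT_W^{(b)}$ denote its optimum. Combining Theorem \ref{bundle-err} with $T \le OPT_W$ from Section \ref{sec-a0} gives
\[
OPT_W \le OPT_W^{(b)} \le OPT_W + \tfrac{\epsilon}{5}T \le \bigl(1+\tfrac{\epsilon}{5}\bigr) OPT_W.
\]
For layer (iii), let $\widetilde{OPT}_W^{(b)}$ be the optimum of the \emph{approximate bundled instance} $\mathcal{I}'$ with types $(c_p e_p, Y_p)$ used inside Algorithm \ref{alg-3}. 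Since $c_p e_p \ge \theta T$ for every $p$, Assumption \ref{assm} holds on $\mathcal{I}'$, and Theorem \ref{dp-err} applied to $\mathcal{I}'$ yields
\[
(1-\delta)\bigl(1-\tfrac{\epsilon}{10}\bigr)\hat V \le \widetilde{OPT}_W^{(b)} \le \hat V.
\]

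For layer (ii), which I expect to be the main obstacle, the goal is to prove
\[
\bigl|\widetilde{OPT}_W^{(b)} - OPT_W^{(b)}\bigr| \le O(\epsilon) \cdot OPT_W.
\]
The argument rests on the $(\zeta,\eta)$-approximation guarantee with $\eta = \epsilon/(10\log W)$ and $\zeta = \log_{1+\eta}(W/\eta)$. Each doubling step in Algorithm \ref{alg-2} inflates the tail probabilities by at most a multiplicative factor $(1+\eta)$, while the truncation threshold is $(1+\eta)^{-\zeta} = \eta/W$. After at most $\log_2 e_p \le \log_2 W$ doublings the compounded distortion on $\Pr[B_{e_p} \ge w]$ is at most $(1+\eta)^{\log_2 W} \le 1+\epsilon/5$, and the accumulated truncated mass is $O(\eta \log W) = O(\epsilon)$. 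To turn this distributional closeness into cost closeness I would fix any optimal strategy on one instance, transport it verbatim (as a function of remaining capacity) to the other instance, and compare the two expected costs via a recurrence on the remaining capacity, using the monotonicity of $OPT_w$ in $w$ together with the upper bound $OPT_w \le b_m(w+W) \le 4T$ supplied by Lemma \ref{constapprox} to absorb the rare-event truncated mass. The detailed bookkeeping is carried out in Appendix \ref{proof-main-thm}.

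Chaining the three layers gives
\[
OPT_W \le OPT_W^{(b)} \le \widetilde{OPT}_W^{(b)} + O(\epsilon)\,OPT_W \le \hat V + O(\epsilon)\,OPT_W,
\]
and symmetrically $\hat V \le (1+O(\epsilon))\,OPT_W$, so rescaling $\epsilon$ by an absolute constant yields $(1-\epsilon) OPT_W \le \hat V \le (1+\epsilon) OPT_W$. The hard part is layer (ii): the doubling tree has depth $\log W$, so the analysis must certify that multiplicative rounding does not compound worse than linearly in the depth, and also that small-probability truncation cannot concentrate on exactly the weight range that the optimal policy depends on; only then does a tail-probability approximation translate into a $(1+O(\epsilon))$ approximation of the expected cost of the best strategy.
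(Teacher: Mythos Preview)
Your three-layer decomposition matches the paper exactly, and your handling of layers (i) and (iii) via Theorem \ref{bundle-err} and Theorem \ref{dp-err} is correct. The genuine gap is in layer (ii), and it is precisely where the paper deploys its main technical idea, which your proposal does not contain.

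You assert that ``each doubling step in Algorithm \ref{alg-2} inflates the tail probabilities by at most a multiplicative factor $(1+\eta)$'' and hence ``the compounded distortion on $\Pr[B_{e_p}\ge w]$ is at most $(1+\eta)^{\log_2 W}$.'' The $(1+\eta)$ factor is what one rounding step contributes, but you also need the convolution step to preserve a multiplicative bound on tail probabilities, and this is not true in general: the $(\zeta,\eta)$-approximation rounds the \emph{tail} array, so the induced point masses $B[j]-B[j+1]$ can be both larger and smaller than $S[j]-S[j+1]$, and after convolving there is no clean inequality of the form $B'_{2^i}[w]\le \alpha\, S_{2^i}[w]$. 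Your fallback of transporting a strategy verbatim and appealing to $OPT_w\le 4T$ runs into the same obstacle: to bound the cost on the exact instance by the cost on the approximate one you need exactly the ``$Y_p$ is not much heavier than $S^p_{e_p}$'' direction, which you have not established.

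The paper sidesteps this entirely. Instead of comparing distributions, it works with \emph{sub-strategies}. Lemma \ref{jihe} (the ``quality factor'' trick) shows that from any sub-strategy $s$ one can build a sub-strategy $t$ with expected cost $\lambda c_s$ and $\text{DistW}(t)\ge \lambda\cdot\text{DistW}(s)$; Lemma \ref{round-error} then absorbs one $(\zeta,\eta)$-rounding into a cost factor $(1+\eta)^2$; and Lemma \ref{bounded-above} inducts over the doubling steps to produce an actual sub-strategy $C_{e_p}$ in the \emph{original} instance with $\text{DistW}(C_{e_p})\ge \text{Dist}(B_{e_p})$ and expected cost at most $(1+\eta)^{2\log_2 W+2}e_pc_p$. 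Stochastic dominance (Corollary \ref{order-convo}) is what survives convolution, and that is why the argument composes cleanly across $\log_2 e_p$ levels. With $C_{e_p}$ in hand, any policy on the approximate bundled instance is simulated in the original instance at a multiplicative cost blowup of $(1+\eta)^{2\log_2 W+2}$, giving $OPT_W^{**}\ge OPT_W/(1+\eta)^{2\log_2 W+2}$; the easy direction $OPT_W^{**}\le OPT_W^{(b)}$ is straight stochastic dominance. This sub-strategy construction is the missing idea in your proposal.
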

	
	To prove this theorem, we first show $\text{Dist}(B_{e_p})$ in Algorithm \ref{alg-2} is approximation of $\text{Dist}(A_{e_p})$, by constructing another strategy $C_{e_p}$ which is strictly better than $B_{e_p}$ and the expected cost of $C_{e_p}$ is closed to the expected cost of $A_{e_p}$ (induction is used). Then we combine all the errors in Algorithm \ref{alg-3} and prove that Algorithm \ref{alg-3} is FPTAS of \prob. For details, please see Appendix \ref{proof-main-thm}.
	
	\subsection{Time Complexity}
	Our algorithm runs in $poly(n,\log W, \frac{1}{\epsilon})$. Combined with Theorem \ref{main-thm}, Algorithm \ref{alg-3} is an FPTAS for \prob. The theorem for the time complexity of Algorithm \ref{alg-3} is stated as follow,
	
	\begin{theorem}\label{time-complexity}
		Algorithm \ref{alg-3} runs in polynomial time and thus is an FPTAS for \prob. More specifically, Algorithm \ref{alg-3} has time complexity
		\[O\left( \frac{n\log^6 W}{\epsilon^3}+\frac{n^3\log W}{\epsilon^4} \right).\]
	\end{theorem}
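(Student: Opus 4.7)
The plan is to split the total cost of Algorithm \ref{alg-3} into two contributions: (i) the preprocessing in Line~1, which invokes Algorithm \ref{alg-2} once per item type to produce the bundled distribution $Y_p$, and (ii) the single call to Algorithm \ref{alg-1} on the $n$ bundled types in Line~3. I will show these contribute the $O(n\log^6 W/\epsilon^3)$ and $O(n^3 \log W/\epsilon^4)$ terms respectively; Line~2 of Algorithm \ref{alg-3} is $O(n)$ bookkeeping, and the polynomial-time conclusion together with Theorem \ref{main-thm} gives the FPTAS statement.

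For the cost of Algorithm \ref{alg-2}, the central structural observation is that after every $(\zeta,\eta)$-approximation the distribution array takes at most $\zeta$ distinct probability levels, so each $B_{2^i}$ can be stored as a sparse list of at most $\zeta$ value-mass pairs. Plugging $\eta = \epsilon/(10\log W)$ into $\zeta = \log_{1+\eta}(W/\eta)$ yields $\zeta = O(\log W / \eta) = O(\log^2 W / \epsilon)$. The outer loop runs $\log_2 e_p$ times; I would bound $\log e_p = O(\log W)$ from $e_p c_p \leq 2\theta T$ combined with $T \leq b_m W \leq b_p W = c_p W / E[X_p] \leq c_p W$. Each iteration convolves two sparse distributions of size $O(\zeta)$ by enumerating the $O(\zeta^2)$ product pairs and merging them into a sorted distribution array in $O(\zeta^2 \log \zeta)$ time, after which rounding to a $(\zeta,\eta)$-approximation is linear in the output. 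Per type this is $O(\log W \cdot \zeta^2 \log \zeta)$, and since $\log \zeta = O(\log W)$ in the relevant regime, summing over the $n$ types gives $O(n\log^6 W / \epsilon^3)$.

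For the cost of Algorithm \ref{alg-1} on the bundled instance, the outer loop runs $\lceil 1/\delta \rceil + 1 = O(n/\epsilon^2)$ times because $\delta = \epsilon^2 / (100 n)$, and for each $i$ we binary-search $w \in [0,W]$ with $O(\log W)$ probes. Each probe invokes Algorithm \ref{compute-f}, whose two nested loops perform $O(n\cdot i) = O(n/\delta) = O(n^2/\epsilon^2)$ probability-interval queries; with each bundled $Y_j$ precomputed as a sorted sparse list of $O(\zeta)$ break-points, a query is a binary search of depth $O(\log \zeta)$ per call, which can be absorbed into the $O(\log W)$ binary-search factor already accounted for. Multiplying the three factors gives $O((n/\epsilon^2) \cdot \log W \cdot (n^2/\epsilon^2)) = O(n^3 \log W / \epsilon^4)$, as claimed.

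The main obstacle is not a deep one but a bookkeeping issue: ensuring that the auxiliary polylogarithmic factors $\log \zeta$, $\log e_p$, and the query cost into the sparse $Y_p$ all collapse into $O(\log W)$ in the regime $\epsilon \geq 1/\mathrm{poly}(W)$, and justifying that the convolution of two $\zeta$-sparse distributions truly costs $O(\zeta^2 \log \zeta)$ rather than $O(W^2)$ — the latter requiring care in how the sparse representation is maintained across iterations of Algorithm \ref{alg-2}. Once these absorptions are justified, the sum of (i) and (ii) yields the stated bound, and combining with Theorem \ref{main-thm} establishes that Algorithm \ref{alg-3} is an FPTAS for \prob.
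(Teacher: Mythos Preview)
Your proposal is correct and mirrors the paper's own proof: the same two-way split into the $n$ invocations of Algorithm~\ref{alg-2} (giving the $O(n\log^6 W/\epsilon^3)$ term via the $O(\zeta^2\log\zeta)$ sparse-convolution cost with $\zeta=O(\log^2 W/\epsilon)$ over $O(\log W)$ doubling steps) and the single call to Algorithm~\ref{alg-1} (giving $O(n\cdot(1/\delta)^2\log W)=O(n^3\log W/\epsilon^4)$). The bookkeeping obstacle you flag in your last paragraph is dispatched in the paper by the one-line observation that whenever $1/\epsilon$ is not $o(W)$ one simply falls back to the pseudo-polynomial DP of Algorithm~\ref{alg-0}, so $\log\zeta$, $\log e_p$, and $\log(1/\epsilon)$ are all $O(\log W)$ in the only regime Algorithm~\ref{alg-3} needs to handle.
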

	
	This theorem can be proved by recalling the parameters we have set, counting for the number of each operation, and expanding the parameters as $n,\log W$ and $\frac{1}{\epsilon}$. Please see Appendix \ref{proof-time-complexity} for the formal proof.

	\section{Conclusions and Further Work}
	We obtain the first FPTAS for \prob~in this paper.
	We focus on computing approximately the optimal value,
	but our algorithms and proofs immediately
	imply how to construct an approximate strategy in polynomial time.
	
	There are some other directions related to \prob~which are still open.
	It would be interesting to design a PTAS (or FPTAS) for the 0/1 stochastic minimization knapsack problem, the 0/1 stochastic (maximization) knapsack problem and the stochastic unbounded (maximization) knapsack problem.
	Hopefully, our techniques can be helpful in solving these problem.
	%
	%
	%

    \section*{Acknowledgement}

    The authors would like to thank Jian Li for several useful discussions and the help with polishing the paper.
The research is supported in part by the National Basic Research Program of China Grant 2015CB358700,
the National Natural Science Foundation of China Grant 61822203, 61772297, 61632016, 61761146003,
and a grant from Microsoft Research Asia

	\nocite{RN14}
	\bibliographystyle{splncs04}
	\bibliography{refx}

\begin{thebibliography}{10}
\providecommand{\url}[1]{\texttt{#1}}
\providecommand{\urlprefix}{URL }
\providecommand{\doi}[1]{https://doi.org/#1}

\bibitem{RN39}
Assaf, D.: Renewal decisions when category life distributions are of
  phase-type. Mathematics of Operations Research  \textbf{7}(4),  557--567
  (1982)

\bibitem{RN31}
Bhalgat, A., Goel, A., Khanna, S., SIAM, ACM: Improved Approximation Results
  for Stochastic Knapsack Problems. Proceedings of the Twenty-Second Annual
  ACM-SIAM Symposium on Discrete Algorithms, SIAM, Philadelphia (2011)

\bibitem{Ex4}
Bhalgat, A.: A (2 + $\epsilon$)-approximation algorithm for the stochastic
  knapsack problem. Manuscript  (2012)

\bibitem{RN45}
Carnes, T., Shmoys, D.: Primal-dual schema for capacitated covering problems,
  Lecture Notes in Computer Science, vol.~5035, pp. 288--302. Springer-Verlag
  Berlin, Berlin (2008)

\bibitem{RN35}
Csirik, J., Frenk, J.B.G., Labbe, M., Zhang, S.: Heuristics for the 0-1
  min-knapsack problem. Acta Cybernetica  \textbf{10}(1-2),  15--20 (1991)

\bibitem{RN30}
Dean, B.C., Goemans, M.X., Vondrak, J., ieee computer, s.: Approximating the
  stochastic knapsack problem: The benefit of adaptivity, pp. 208--217. Annual
  IEEE Symposium on Foundations of Computer Science, IEEE Computer Soc, Los
  Alamitos (2004)

\bibitem{RN29}
Derman, C., Lieberman, G.J., Ross, S.M.: A renewal decision problem. Management
  Science  \textbf{24}(5),  554--561 (1978)

\bibitem{RN47}
Deshpande, A., Hellerstein, L., Kletenik, D.: Approximation algorithms for
  stochastic submodular set cover with applications to boolean function
  evaluation and min-knapsack. ACM Transactions on Algorithms  \textbf{12}(3),
  ~28 (2016)

\bibitem{garey2002computers}
Garey, M.R., Johnson, D.S.: Computers and intractability, vol.~29. wh freeman
  New York (2002)

\bibitem{RN36}
Guntzer, M.M., Jungnickel, D.: Approximate minimization algorithms for the 0/1
  knapsack and subset-sum problem. Operations Research Letters  \textbf{26}(2),
   55--66 (2000)

\bibitem{Ex2}
Gupta, A., Krishnaswamy, R., Molinaro, M., Ravi, R.: Approximation algorithms
  for correlated knapsacks and non-martingale bandits. In: {IEEE} 52nd Annual
  Symposium on Foundations of Computer Science, {FOCS} 2011, Palm Springs, CA,
  USA, October 22-25, 2011. pp. 827--836 (2011)

\bibitem{RN34}
Han, X., Makino, K.: Online Minimization Knapsack Problem, Lecture Notes in
  Computer Science, vol.~5893, pp. 182--193. Springer-Verlag Berlin, Berlin
  (2010)

\bibitem{ibarra1975fast}
Ibarra, O.H., Kim, C.E.: Fast approximation algorithms for the knapsack and sum
  of subset problems. Journal of the ACM (JACM)  \textbf{22}(4),  463--468
  (1975)

\bibitem{jansen2018faster}
Jansen, K., Kraft, S.E.: A faster fptas for the unbounded knapsack problem.
  European Journal of Combinatorics  \textbf{68},  148--174 (2018)

\bibitem{kellerer2004multidimensional}
Kellerer, H., Pferschy, U., Pisinger, D.: Multidimensional knapsack problems.
  In: Knapsack problems, pp. 235--283. Springer (2004)

\bibitem{RN14}
Li, J., Shi, T.L.: A fully polynomial-time approximation scheme for
  approximating a sum of random variables. Operations Research Letters
  \textbf{42}(3),  197--202 (2014)

\bibitem{RN33}
Li, J., Yuan, W.: Stochastic combinatorial optimization via poisson
  approximation. In: Proceedings of the Forty-fifth Annual ACM Symposium on
  Theory of Computing. pp. 971--980. STOC '13, ACM, New York, NY, USA (2013)

\bibitem{Ma2014}
Ma, W.: Improvements and generalizations of stochastic knapsack and multi-armed
  bandit approximation algorithms. In: Proceedings of the twenty-fifth annual
  ACM-SIAM symposium on Discrete algorithms (2014)

\bibitem{ross1989stochastic}
Ross, K.W., Tsang, D.H.: The stochastic knapsack problem. IEEE Transactions on
  communications  \textbf{37}(7),  740--747 (1989)

\bibitem{RN48}
Sahni, S.: Approximate algorithms for 0/1 knapsack problem. Journal of the ACM
  \textbf{22}(1),  115--124 (1975)

\end{thebibliography}
	%
	
	
	
	
	
	\newpage
	\appendix
	
	\section{Detailed Version of Algorithm \ref{alg-1}}\label{full-pseudo-code}
	In this section, we provide the detailed version of Algorithm \ref{alg-1}, which is shown below as Algorithm \ref{alg-1-detailed}.
	
	\begin{algorithm}[!ht]
		\caption{The Dynamic Program for Computing approximate answer for \prob~under Assumption \ref{assm}: Detailed Version}\label{alg-1-detailed}
		\begin{algorithmic}[1] 
			\State Let $\delta = \frac{\epsilon^2}{100n}$.
			\State $f_{0}\leftarrow 0$
			\For{$i=1\rightarrow \lceil\frac{1}{\delta}\rceil+1$} \Comment{The DP for computing $f_i$}
			\State Let $L=0,R=W$
			\While{$R>L$} \Comment{Binary search to guess $f_i$}
			\State $w\leftarrow \lfloor(L+R)/2\rfloor$
			\State $e\leftarrow 0$ \Comment{$e$ means whether $g_w$ is less or equal to $i\delta T$}
			\For{$j=1\rightarrow n$}
			\For{$m=0\rightarrow i-2$} \Comment{Compute the probability after rounding the weight}
			\State $P_m=Pr\left[X_j\in [w-f_{m+1},w-f_{m}) \right]$ \Comment by Binary search from oracle
			\EndFor
			\State $P_{i-1}=Pr\left[X_j\in [1,w-f_{i-1}) \right]$
			\State $g \leftarrow c_j+\sum_{m=0}^{i-1}P_m (m+1) \delta T$ \Comment Equation (\ref{g-to-f})
			\If{$g\leq i \delta T$}
			\State $e\leftarrow 1$
			\EndIf
			\EndFor
			\If{$e=1$}
			\State $L=w$
			\Else
			\State $R=w-1$
			\EndIf
			\EndWhile
			\State $f_i\leftarrow L$
			\If{$f_i\geq W$}
			\State \Return $\hat{V}:=i\delta T$
			\EndIf
			\EndFor
		\end{algorithmic}
	\end{algorithm}

    \section{Proof of Proposition \ref{positive-support-assumption}}\label{proof-positive-sup}

    \begin{proof}[Proof of Proposition \ref{positive-support-assumption}]
		
		Firstly recall that in the definition of \prob, $D_i$ has non-negative integer support. If we add an item of type $i$ and the realized weight $X_i = 0$, because there are infinite number of items of each type and the state does not change, from the dynamic program, we should also add the item of type $i$ until the realized weight is not $0$. We can construct another type $i'$ with distribution $D'_i$ and cost $c'_i$ to replace type $i$, where using one item of type $i'$ is equivalent to using several items of type $i$ until the realized weight of one item is positive. Then, $D'_i$ has positive integer support, and formally speaking, we have
		\[c'_i = \frac{c_i}{1-d_i(0)}, d'_i(t) = \frac{d_i(t)}{1-d_i(0)},\forall t>0,\]
		where $d'_i(t) = D'_i(t) - D'_i(t-1)$ (recall that $d_i(t) = D_i(t) - D_i(t-1)$).

        Then we can get
		\[D'_i(t) = \sum_{j=1}^t d_i(j) = \frac{D_i(t) - D_i(0)}{1-d_i(0)}.\]
		\qed
	\end{proof}

	\section{Proof of Lemma \ref{constapprox}}\label{proof-constapprox}
	\begin{proof}[Proof of Lemma \ref{constapprox}]
		Prove by induction. First, for all $-W+1 \le w\le 0$, $OPT_w=0$,
		\[b_{m}w\le OPT_w \le b_{m}(w+W).\]
		Suppose for all $w\le k,k\geq 0$, $b_{m}w\le OPT_w \le b_{m}(w+W)$. Then we have that
		\begin{align*}
		OPT_{k+1} =& \min_{i}\left(c_i + \mathbb E_{X_i} \left[OPT_{k+1-X_i}\right]\right) \\
		\le& c_m + \mathbb E_{X_m} \left[OPT_{k+1-X_m}\right] \\
		=& c_m + \sum_{w=1}^W d_m(w)OPT_{k+1-w} \\
		\le& c_m  + \sum_{w=1}^W d_m(w)b_m(k+1-w+W) \\
		=& c_m  + b_{m}(k+1+W) - b_{m}\mathbb E[X_m] \\
		=& b_{m}(k+1+W).
		\end{align*}
		\begin{align*}
		OPT_{k+1} =& \min_{i}\left(c_i + \mathbb E_{X_i} \left[OPT_{k+1-X_i}\right]\right) \\
		=& \min_{i}\left( c_i + \sum_{w=1}^W d_i(w)OPT_{k+1-w} \right) \\
		\ge& \min_{i}\left( c_i + \sum_{w=1}^W d_i(w)b_i(k+1-w) \right) \\
		=& \min_{i}\left(  c_i  + b_{i}(k+1) - b_{i}\mathbb E[X_i] \right) \\
		=& \min_{i}\left(  b_{i}(k+1)  \right)\\
		\ge& b_{m}(k+1).
		\end{align*}
		Then we arrange the terms, and we get
		\[b_{m}(k+1)\le OPT_{k+1} \le b_{m}(k+1+W).\]
		Above, we complete the proof by induction.\qed
	\end{proof}
	
	\section{Proof of Theorem \ref{dp-err}}\label{proof-dp-err}
	In this section, we will analyze the approximation error of Algorithm \ref{alg-1} and prove Theorem \ref{dp-err}. We will rely on two lemmas to prove Theorem \ref{dp-err}. Generally speaking, the first lemma shows that the real optimal value is upper bounded by the value computed in our algorithm, and the second lemma shows that under Assumption \ref{assm}, the error between the real optimal value is lower bounded by the difference between the value computed in our algorithm and another small value.
	
	Before proving the theorem, let's first recall Assumption \ref{assm}. Assumption \ref{assm} says that, for each type $i$, $c_i\ge \theta T$, where $\theta= \frac{\epsilon}{10n}$ and $c_i$ is the cost of type $i$. Then, we will also recall the variables and notations defined previously.
	
	We use $OPT_W$ denote the optimal value of \prob~and $\overline{OPT}_{W}$ denote the estimation of the optimal value in Section \ref{sec-a0}. We define $T=\frac{1}{4}\overline{OPT}_{W}$.
	
	Similar to the FPTAS of the ordinary knapsack problem, we define $\hat{f}_c=\max\{w|OPT_w\le c\}$, and compute an approximation for $\hat f$. However, the support of $\hat f$ is the set of real numbers. So we discretize $\hat f$ and only compute the approximation for $\hat f_{i\delta T}$ for all $i \le \lceil\frac{1}{\delta}\rceil + 1$, where $\delta = \frac{\epsilon^2}{100n}$. In our algorithm, we use dynamic programming to compute $f_i$, which is the approximation for $\hat f_{i\delta T}$.
	
	We also define $\hat g_w:=j\delta T$, for all $\hat f_{j-1}< w \leq \hat f_{j}$. Then $\hat g_w$ is the rounding up discretization value of $OPT_w$. In the algorithm, we use the following quantity to approximate $\hat f$. Given $f_0,\dots,f_{i-1}$, define $g_w:=j\delta T$ for all $f_{j-1}< w \leq f_{j},j\le i-1$, and define $g_w = i\delta T$ for all $w>f_{i-1}$. The ideas behind Algorithm \ref{alg-1} and the process to compute $f_i,g_w$ are shown in Section \ref{sec-a1}.
	
	Then, we will prove Theorem \ref{dp-err}. We first have the following lemmas.
	
	\begin{lemma}\label{cor-dp-1}
		
		For all $f_{i}$, we have $f_{i}\leq \hat{f}_{i\delta T}$, which means $  OPT_{f_{i}} \leq i\delta T $.
		
	\end{lemma}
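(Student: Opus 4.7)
The plan is to prove $f_i \le \hat{f}_{i\delta T}$ by induction on $i$; since $\hat{f}_c = \max\{w \mid OPT_w \le c\}$ and $OPT$ is non-decreasing, this is equivalent to $OPT_{f_i} \le i\delta T$, which is the form I will work with. The base case $i=0$ is immediate because $f_0 = 0$ and $OPT_0 = 0$. In the inductive step I would like to combine the algorithm's defining condition at $w' = f_i$, namely $c_k + \sum_{j=1}^{f_i - 1} d_k(f_i - j) g_j \le i\delta T$ for some witness $k$, with the DP inequality $OPT_{f_i} \le c_k + \sum_{j=1}^{f_i - 1} d_k(f_i - j) OPT_j$ obtained by plugging a single fixed action into the Bellman recurrence. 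This reduction requires $OPT_j \le g_j$ for every $j$ appearing in the sum. For $j \le f_{i-1}$ that is fine via the outer induction hypothesis and the monotonicity of $OPT$, but for $f_{i-1} < j < f_i$ the definition only gives $g_j = i\delta T$, and a priori we have no handle on $OPT_j$ in that range, so the argument threatens to be circular.

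The key observation that breaks the circularity is that the selection condition in (\ref{g-to-f}) is downward-closed in $w'$. Indeed, $g$ is non-decreasing in its index, so decreasing $w'$ to $w'-1$ both removes the nonnegative summand $d_k(w'-1)\, g_1$ and replaces each remaining $g_{w'-x}$ by the smaller $g_{(w'-1)-x}$. Hence every $w \le f_i$ admits a witness type $k_w$ satisfying $c_{k_w} + \sum_{j=1}^{w-1} d_{k_w}(w-j)\, g_j \le i\delta T$. Using this, I would prove the strengthened statement $OPT_w \le g_w$ for every $w \le f_i$, by strong induction on $w$. When $w \le f_{i-1}$, the outer induction hypothesis and monotonicity of $OPT$ yield $OPT_w \le OPT_{f_p} \le p\delta T = g_w$, where $p \le i-1$ is the unique index with $f_{p-1} < w \le f_p$. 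When $f_{i-1} < w \le f_i$, the DP inequality together with the inner induction hypothesis (which supplies $OPT_j \le g_j$ for each $j < w$) gives $OPT_w \le c_{k_w} + \sum_{j=1}^{w-1} d_{k_w}(w-j)\, g_j \le i\delta T = g_w$. Specializing to $w = f_i$ produces $OPT_{f_i} \le i\delta T$, closing the outer induction.

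The main obstacle is exactly the circularity described above: the proxy $g_j$ on the band $(f_{i-1}, f_i]$ is set to the coarse value $i\delta T$, which is also the bound we are trying to establish for $OPT_j$ in that range. The plan resolves this by packaging the downward-closedness of the algorithm's condition with a nested strong induction on $w$, which turns $g$ into an honest pointwise upper bound on $OPT$ throughout $[1, f_i]$ rather than merely at the previously computed points $f_0, \ldots, f_{i-1}$.
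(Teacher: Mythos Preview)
Your proof is correct, but it takes a more elaborate route than the paper's. The point where you diagnose a potential circularity --- controlling $OPT_j$ on the band $f_{i-1} < j < f_i$ --- is exactly where the two arguments diverge. The paper sidesteps the issue entirely by working with $\hat f_{i\delta T}$ instead of $f_i$: for every $w$ with $f_{i-1} < w \le \hat f_{i\delta T}$ one has $OPT_w \le i\delta T = g_w$ \emph{by definition of $\hat f$}, so the pointwise bound $OPT_j \le g_j$ holds on all of $[1,\hat f_{i\delta T}]$ for free, without any inner induction. From there the paper just compares the two defining maxima for $f_i$ and $\hat f_{i\delta T}$ and reads off $f_i \le \hat f_{i\delta T}$ (this last step tacitly uses the same downward-closedness you isolate, applied at $w=\hat f_{i\delta T}+1$). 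Your approach, by contrast, never leaves the algorithmic objects $f$ and $g$: you prove $OPT_w \le g_w$ on $[1,f_i]$ by a nested strong induction, feeding the Bellman inequality with the witness $k_w$ that downward-closedness supplies. This buys you a self-contained argument that does not appeal back to $\hat f$, at the cost of the extra induction layer; the paper's version is shorter but leaves the final ``So, $f_i\le \hat f_{i\delta T}$'' step a bit implicit.
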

	
	\begin{proof}[Proof of Lemma \ref{cor-dp-1}]
		
		We prove this by induction. The base case is true, which is just $f_{0}=0$ and $\hat{f}_{0}=0$. Now, assume the statement is true for $f_{j}$ where $j\leq i-1$. We prove the statement is also true for $f_{i}$.
		
		For $0\leq j<i$, $  OPT_{f_{j}} \leq j\delta T $. So for all $1\leq j\leq i-1$, for all $f_{j-1}< w\leq f_{j}$,
		$$OPT_{w} \leq OPT_{f_{j}} \leq j \delta T. $$
		
		We know $g_{w}=j\delta T$, so $OPT_{w}\leq g_{w}$ for all $0\leq w\leq f_{i-1}$.
		
		When we compute $f_i$, we define $g_w=i\delta T$ for all $w>f_{i-1}$, so $OPT_{w}\leq g_{w}$ for all $w\leq \hat{f}_{i\delta T}$.
		
		We know
		\[f_i=\max \left\{ w ~\Bigg{|}~
		\exists k, \left( c_k+\sum_{j=1}^{w-1} d_k(w-j) g_{j} \right)
		\leq i\delta T  \right\},\]
		and
		\[\hat{f}_{i\delta T}=\max\left\{w'~\Bigg{|}~\exists k,
		\left( c_k + \sum_{j=1}^{w'-1} d_k(w'-j)OPT_{j} \right) \le i\delta T\right\}.\]
		
		So, $f_i\leq \hat{f}_{i\delta T}$, which implies $OPT_{f_i}\leq i\delta T$.\qed
	\end{proof}
	
	\begin{lemma}\label{cor-dp2}
		
		For all $f_{i}$, $ OPT_{f_{i}+1} > i\delta T - \frac{i\delta^2}{\theta}T$.
		
	\end{lemma}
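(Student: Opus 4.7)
The plan is to induct on $i$, comparing the Bellman equation for $OPT_{f_i+1}$ with the maximality condition that defines $f_i$, and using Assumption~\ref{assm} only at the very end. For the base case $i=0$, the claim reduces to $OPT_1>0$, immediate since every type has positive cost. For the inductive step, fix $i\ge 1$, assume the lemma for all $j<i$, and let $k^*$ denote the minimizer in the Bellman recursion at $w=f_i+1$. Because $w'=f_i+1$ fails the defining condition in~(\ref{g-to-f}), for every type $k$
\[
c_k+\sum_{j=1}^{f_i}d_k(f_i+1-j)\,g_j\;>\;i\delta T.
\]
Subtracting the Bellman equation written with $k=k^*$ and rearranging gives
\[
OPT_{f_i+1}\;>\;i\delta T\;-\;\sum_{j=1}^{f_i}d_{k^*}(f_i+1-j)\,(g_j-OPT_j),
\]
so everything hinges on bounding the expected discretization gap $g_j-OPT_j$.

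For $j$ in the bucket $f_{l-1}<j\le f_l$ (with the last bucket $f_{i-1}<j\le f_i$ corresponding to $l=i$), monotonicity of $OPT$ and the induction hypothesis give $OPT_j\ge OPT_{f_{l-1}+1}>(l-1)\delta T-\frac{(l-1)\delta^2}{\theta}T$. Combined with $g_j=l\delta T$, a one-line rearrangement puts this in the crucial affine-in-$g_j$ form
\[
g_j-OPT_j\;<\;\delta T\left(1-\frac{\delta}{\theta}\right)+\frac{\delta}{\theta}\,g_j,
\]
which also holds in the degenerate case $l=1$ because then $OPT_j>0$ and the right-hand side equals $\delta T$. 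Averaging against $d_{k^*}(f_i+1-j)$, whose sum is at most $1$, yields
\[
\sum_{j=1}^{f_i} d_{k^*}(f_i+1-j)\,(g_j-OPT_j)\;<\;\delta T\left(1-\frac{\delta}{\theta}\right)+\frac{\delta}{\theta}\,S,\qquad S:=\sum_{j=1}^{f_i} d_{k^*}(f_i+1-j)\,g_j.
\]

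Substituting this back into the first display and using $c_{k^*}+S>i\delta T$ to replace $S$ by its lower bound $i\delta T-c_{k^*}$ lets the $c_{k^*}$ contributions collapse into
\[
OPT_{f_i+1}\;>\;\frac{\delta}{\theta}\,c_{k^*}+\left(1-\frac{\delta}{\theta}\right)(i-1)\delta T.
\]
At this point Assumption~\ref{assm} enters decisively: since $c_{k^*}\ge\theta T$, the first summand is at least $\delta T$, and combining the two summands gives $OPT_{f_i+1}>i\delta T-\frac{(i-1)\delta^2}{\theta}T\ge i\delta T-\frac{i\delta^2}{\theta}T$, closing the induction.

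The main obstacle is that a naive gap estimate costs a full $\delta T$ per inductive step (the one-cell rounding error), which would be fatal since $\delta T$ does not shrink with $i$. The key trick is writing $g_j-OPT_j$ as an affine function of $g_j$ itself, so that the multiplicative coefficient $\delta/\theta$ can be paid for by the $c_{k^*}\ge\theta T$ lower bound and recovers exactly the missing $\delta T$. Verifying this algebraic cancellation, and checking that the $l=1$ boundary case still fits the affine bound, will be the only delicate point.
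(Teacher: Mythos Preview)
Your overall plan matches the paper's proof almost exactly: induct on $i$, rewrite the induction hypothesis as the affine bound $OPT_j > \bigl(1-\tfrac{\delta}{\theta}\bigr)g_j - \delta T$ (equivalently your $g_j-OPT_j < \delta T(1-\tfrac{\delta}{\theta})+\tfrac{\delta}{\theta}g_j$), feed this into the Bellman expression at $w=f_i+1$, invoke the failure of~\eqref{g-to-f} at $f_i+1$, and use Assumption~\ref{assm} to absorb the stray $\delta T$. The target inequality $OPT_{f_i+1}>\tfrac{\delta}{\theta}c_{k^*}+(1-\tfrac{\delta}{\theta})(i-1)\delta T$ that you write down is exactly what the paper obtains (just rearranged).

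There is, however, one genuine slip in the chain of inequalities. In deriving your ``first display''
\[
OPT_{f_i+1}\;>\;i\delta T-\sum_j d_{k^*}(f_i+1-j)\,(g_j-OPT_j)
\]
you have \emph{already} used the failure inequality $c_{k^*}+S>i\delta T$, and in doing so you eliminated $c_{k^*}$. After substituting your gap bound, $S$ survives only with the \emph{negative} coefficient $-\tfrac{\delta}{\theta}$:
\[
OPT_{f_i+1}\;>\;i\delta T-\delta T\Bigl(1-\tfrac{\delta}{\theta}\Bigr)-\tfrac{\delta}{\theta}S.
\]
Replacing $S$ here by its \emph{lower} bound $i\delta T-c_{k^*}$ moves the right-hand side \emph{up}, not down, so it cannot be chained to the previous strict inequality. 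That is why ``the $c_{k^*}$ contributions collapse'' looks miraculous: the algebra yields the right expression, but the inequality points the wrong way.

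The fix is minimal: do not spend the failure inequality in the first step. Keep the Bellman identity
\[
OPT_{f_i+1}\;=\;c_{k^*}+S-\sum_j d_{k^*}(f_i+1-j)\,(g_j-OPT_j),
\]
substitute your gap bound to get $OPT_{f_i+1}>c_{k^*}+\bigl(1-\tfrac{\delta}{\theta}\bigr)(S-\delta T)$, and \emph{now} use $S>i\delta T-c_{k^*}$ (the coefficient of $S$ is positive, so the direction is correct). This lands exactly on $\tfrac{\delta}{\theta}c_{k^*}+(1-\tfrac{\delta}{\theta})(i-1)\delta T$, and your final paragraph then goes through verbatim. This is precisely the order the paper uses.
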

	
	\begin{proof}[Proof of Lemma \ref{cor-dp2}]
		
		We prove it by induction. The base case is true which is $f(1)>0$. Now assume the statement is true for $f_{j}$ where $j\leq i-1$. We show that the statement is also true for $f_{i}$.
		
		For $0\leq j<i$, $ OPT_{f_{j}+1} > j\delta T - \frac{j\delta^2}{\theta}T$. So for all $1\leq j\leq i$, for all $f_{j-1}< w\leq f_{j}$,
		\[OPT_{w} \geq OPT_{f_{j-1}+1} > (j-1)\delta T - \frac{(j-1)\delta^2}{\theta}T. \]
		
		We know $g_{w}=j\delta T$, where $f_{j-1}< w\leq f_{j}$, so
		\begin{align}\label{lim-g-opt}
		OPT_{w} \geq \left(1-\frac{\delta}{\theta}\right) g_{w} - \delta T,~~\forall f_{j-1}< w\leq f_{j}.
		\end{align}
		
		We know
		\[
		f_i=\max \left\{ w' ~\Bigg{|}~
		\exists k, \left( c_k+\sum_{j=1}^{w'-1} d_k(w'-j) g_{j} \right)
		\leq i\delta T  \right\}.
		\]
		
		Let $w^*=f_{i}+1$. Then for all $k$,
		\[ \left( c_k+\sum_{j=1}^{w^{*}-1} d_k(w^{*}-j) g_{j} \right) > i\delta T.  \]
		
		From \ref{lim-g-opt}, we know that for all $k$,
		\begin{align*}
		&~~~~\left( c_k+\sum_{j=1}^{w^{*}-1} d_k(w^{*}-j) OPT_{j} \right)  \\
		&\geq \left( c_k-\delta T +\left(1-\frac{\delta}{\theta}\right)\sum_{j=1}^{w^{*}-1} d_k(w^{*}-j) g_{j} \right) \\
		&\geq \left( c_k-\frac{\delta}{\theta} c_k +\left(1-\frac{\delta}{\theta}\right)\sum_{j=1}^{w^{*}-1} d_k(w^{*}-j) g_{j} \right) \\
		&\geq \left(1-\frac{\delta}{\theta}\right) \left( c_k + \sum_{j=1}^{w^{*}-1} d_k(w^{*}-j) g_{j} \right) \\
		&>\left(1-\frac{\delta}{\theta}\right)i\delta T    \\
		&=i\delta T - \frac{i\delta^2}{\theta}T.   \\
		\end{align*}
		It means
		\[OPT_{w^{*}}= \min_{k}\left\{c_k + \sum_{j=1}^W d_k(w^*-j)OPT_{j}\right\} >i\delta T - \frac{i\delta^2}{\theta}T.\]
		
		Then, we complete the proof by induction.\qed
	\end{proof}
	
	Given Lemma \ref{cor-dp-1} and Lemma \ref{cor-dp2}, we can prove the main theorem (Theorem \ref{dp-err}) in Section \ref{sec-a1}.
	
	\begin{proof}[Proof of Theorem \ref{dp-err}]
		First, with our algorithm, we have $\hat V = i\delta T$, where $f_{i-1} <W \le f_i$. Then because $OPT_w$ is increasing with respect to $w$, we know that
		\[OPT_{f_{i-1}+1} \le OPT_W \le OPT_{f_{i}}.\]
		Combining Lemma \ref{cor-dp-1} and Lemma \ref{cor-dp2}, we have
		\[OPT_{f_{i}} \leq i\delta T,~OPT_{f_{i-1}+1} > (i-1)\delta T - \frac{(i-1)\delta^2}{\theta}T = (i-1)\left(1-\frac{\epsilon}{10}\right)\delta T,\]
		which leads to
		\[(i-1)\left(1-\frac{\epsilon}{10}\right)\delta T \le OPT_W \le i\delta T.\]
		Recalling our definition of $T$, we have $T \le OPT_W$. Then we have $(i-1)\left(1-\frac{\epsilon}{10}\right) \ge \frac{1}{\delta}$, which implies $i \ge \frac{1}{\delta}$. Then from the fact that $\frac{i-1}{i} = 1 - \frac{1}{i} \ge 1-\delta$, we have
		\[(1-\delta)\left(1-\frac{\epsilon}{10}\right)i\delta T \le OPT_W \le i\delta T,\]
		which completes the proof of Theorem \ref{dp-err}.\qed
	\end{proof}

	\section{Proof of Theorem \ref{bundle-err}}\label{bundle-err-proof}
	
		Suppose the expected cost of the best restricted strategy is ${OPT}_{W}^{(r)}$. Obviously, we have $OPT_{W}\leq {OPT}_{W}^{(r)}$.
	
	\begin{lemma}\label{restrict-err}
		We have
		\[ {OPT}_{W}^{(r)} \leq OPT_{W}+ \frac{\epsilon T}{5}. \]
	\end{lemma}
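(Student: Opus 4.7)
The plan is to take any optimal (unrestricted) strategy and turn it into a restricted one by padding the item counts at the very end, and then account for the modest overhead this introduces. Concretely, start with a strategy $\pi^\ast$ achieving expected cost $OPT_W$ and execute it until the total weight reaches $W$. For every cheap type $p$ (those with $c_p < \theta T$, equivalently $e_p > 1$), let $N_p$ denote the realized number of items of type $p$ that $\pi^\ast$ inserted; then append $(-N_p)\bmod e_p$ additional items of type $p$, which is a number in $\{0,1,\dots,e_p-1\}$. Call the resulting strategy $\pi^{(r)}$.

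Next I would verify that $\pi^{(r)}$ is indeed a restricted strategy. The weight constraint is preserved, because we only insert more items on top of a realization in which $\pi^\ast$ already achieved weight $\ge W$. By construction, for every realization the total number of items of each cheap type is a multiple of $e_p$, and for each expensive type $e_p = 1$ so divisibility is automatic. Hence $\pi^{(r)}$ satisfies Definition \ref{restricted-strategy}, and therefore $OPT_W^{(r)} \le \mathbb{E}[\text{cost of }\pi^{(r)}]$.

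The last step is the error bound, which is pure accounting. For each cheap type $p$, the expected padding cost is at most $(e_p - 1)\, c_p < e_p c_p \le 2\theta T$, where the last inequality is the defining property $e_p c_p \in [\theta T, 2\theta T]$ of the chosen $e_p$. Summing over at most $n$ cheap types gives a total expected overhead of at most $2n\theta T$, and substituting $\theta = \epsilon/(10n)$ yields $2n \cdot \frac{\epsilon}{10n} T = \frac{\epsilon T}{5}$. Combining, $OPT_W^{(r)} \le OPT_W + \frac{\epsilon T}{5}$, as desired.

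The main obstacle, such as it is, is conceptual rather than technical: one must be comfortable with permitting the restricted strategy to keep inserting items after the weight has already reached $W$, since the padding takes place past that point. Once one accepts this (and the proof sketch in the main text explicitly encourages it by writing ``we can add some item after following the optimal strategy''), the argument involves no adaptivity subtlety, because the padding for type $p$ depends only on the final realized count $N_p$, and no quantitative estimate beyond the one-line inequality $(e_p - 1) c_p \le 2\theta T$ is needed.
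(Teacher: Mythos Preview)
Your proposal is correct and follows essentially the same approach as the paper: take the optimal strategy, pad each type at the end to a multiple of $e_p$, bound the per-type padding cost by $e_p c_p \le 2\theta T$, and sum over the (at most $n$) types to get $2n\theta T = \frac{\epsilon T}{5}$. Your write-up is simply a more detailed version of the paper's argument.
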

	
	\begin{proof}
		
		In the best strategy with expected cost $OPT_{W}$, after the total weight of items in the knapsack reaches $W$, we can insert some extra items such that the total number of items of type $p$ in the knapsack is a multiple of $e_p$.
		By inserting extra items, the original strategy becomes a restricted strategy. And the cost of extra items of one type is at most $2\theta T$. So the total extra cost is at most $2n\theta T=\frac{\epsilon T}{5}$.
		\qed
	\end{proof}
	
	
	The following lemma shows, in a strategy, if we must insert a multiple of $e_p$ number of items of type $p$, it is equivalent to a strategy such that at each time we choose to insert an item of type $p$, we insert $e_p$ of them together.
	\begin{lemma}\label{change-order}
		In the process we insert items, if we must insert at least one more item of type $p$ (recall that the number of items of type $p$ we insert in the end must be multiple of $e_p$), Inserting one item of type $p$ right now is one best strategy.
	\end{lemma}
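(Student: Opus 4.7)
The plan is to prove Lemma \ref{change-order} by strong induction on the remaining capacity $w$, working directly with the Bellman recursion for the optimal restricted strategy. Let $V(w, \vec k)$ denote the optimal expected cost-to-go at state $(w, \vec k)$, where $\vec k = (k_1, \ldots, k_n)$ records the number of items of each type inserted so far modulo $e_p$. For $w > 0$,
\[ V(w, \vec k) = \min_{q} \left\{ c_q + \mathbb E_{X_q}\bigl[V(w - X_q, \vec k_{(q)})\bigr] \right\}, \]
where $\vec k_{(q)}$ is $\vec k$ with its $q$-th coordinate replaced by $(k_q + 1) \bmod e_q$; the boundary conditions are $V(w, \vec 0) = 0$ for $w \le 0$ and $V(w, \vec k) = +\infty$ for $w \le 0$ with $\vec k \ne \vec 0$ (infeasible stopping). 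In this language, the lemma is the claim: whenever $k_p > 0$, the minimum above is attained at $q = p$.

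For the inductive step at $w > 0$, fix any $q \ne p$ and define $V_t := c_t + \mathbb E[V(w - X_t, \vec k_{(t)})]$; the target is $V_p \le V_q$. The case $V_q = +\infty$ is trivial, so assume $V_q$ is finite. Because $q \ne p$, the vector $\vec k_{(q)}$ retains $p$-coordinate equal to $k_p > 0$, and since $X_q$ has positive integer support, $w - X_q < w$. The inductive hypothesis applied at $(w - X_q, \vec k_{(q)})$ then gives that inserting $p$ is optimal there, yielding
\[ V(w - X_q, \vec k_{(q)}) = c_p + \mathbb E_{X_p}\bigl[V(w - X_q - X_p, \vec k_{(q)(p)})\bigr]. \]
On the other side, applying the Bellman inequality at $(w - X_p, \vec k_{(p)})$ to the candidate action ``insert $q$'' gives
\[ V(w - X_p, \vec k_{(p)}) \le c_q + \mathbb E_{X_q}\bigl[V(w - X_p - X_q, \vec k_{(p)(q)})\bigr]. \]
Substituting back produces $V_p \le c_p + c_q + \mathbb E[V(w - X_p - X_q, \vec k_{(p)(q)})]$ and $V_q = c_q + c_p + \mathbb E[V(w - X_q - X_p, \vec k_{(q)(p)})]$. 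These right-hand sides are equal by two symmetries: (i) the final count vector is invariant under swapping the insertion order, $\vec k_{(p)(q)} = \vec k_{(q)(p)}$, and (ii) the independent variables $X_p, X_q$ satisfy $X_p + X_q \stackrel{d}{=} X_q + X_p$. Hence $V_p \le V_q$, closing the induction.

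The main subtlety will be the bookkeeping around the per-type count vector: one must track that inserting some $q \ne p$ leaves the $p$-obligation intact so that the inductive hypothesis applies at the new state, and one must verify that the infeasibility boundary ($V = +\infty$ when stopping occurs with nonzero $\vec k$) is compatible with the Bellman comparisons used. Both issues are self-resolving, since any infeasible action sends $V_q$ to $+\infty$, making the target inequality trivial. Once Lemma \ref{change-order} is proved, iterating it shows that an optimal restricted strategy always finishes a partial block of its currently chosen type before switching, giving exactly the equality between the optimal restricted value and the optimal block value that the second half of Theorem \ref{bundle-err} requires.
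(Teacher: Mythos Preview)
Your approach differs from the paper's. The paper gives a two-line coupling argument: take any (in particular, any optimal) restricted strategy, insert one item of type $p$ immediately but \emph{pretend} it has not been inserted; then follow the original strategy verbatim, and when it next asks for a type-$p$ insertion, skip that insertion and let the already-inserted item play its role. The new strategy has exactly the same cost distribution as the original, so inserting $p$ first is among the optimal actions. No induction and no Bellman equations are used.

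Your Bellman-induction route makes the state space explicit, and the interchange step
\[
V_p \;\le\; c_p + c_q + \mathbb E\bigl[V(w-X_p-X_q,\vec k_{(p)(q)})\bigr] \;=\; V_q
\]
captures the same symmetry the paper's coupling exploits. However, there is a genuine gap in your boundary condition. Setting $V(w,\vec k)=+\infty$ for $w\le 0$ and $\vec k\ne\vec 0$ does not model the restricted strategy: overshooting $W$ with an incomplete block is \emph{not} infeasible---the process simply keeps inserting items until every count is a multiple of $e_t$, and the value at such a state is the finite quantity $\sum_{t:\,k_t>0}(e_t-k_t)\,c_t$. With your $+\infty$ convention the recursion no longer describes the intended problem; e.g.\ a single insertion of a cheap type from $\vec k=\vec 0$ already produces an infinite value whenever its weight can exceed the current $w$, so the value function degenerates. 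The fix is easy: replace the $+\infty$ boundary by the completion cost above, verify directly that inserting $p$ attains this value when $w\le 0$ and $k_p>0$ (this is the true base case of your induction, which you currently omit), and then your inductive step goes through verbatim. With that correction your argument is sound and equivalent in content to the paper's coupling, only phrased through the dynamic program rather than through a pathwise swap.
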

	
	\begin{proof}
		We first insert one item of type $p$ right now and pretend we didn't add this item. Then we insert items under the guidance of the original strategy until the original strategy tells us to insert the item of type $p$ into the knapsack. At present, we do not add it, and we regard the item of type $p$ we inserted before as the item we should insert at present.
		
		It is clear that the new strategy has the same expected cost as the original strategy.\qed
	\end{proof}
	
	The proof of Theorem \ref{bundle-err} follows directly from Lemma \ref{restrict-err} and Lemma \ref{change-order}.
	
	\section{Proof of Theorem \ref{main-thm}}\label{proof-main-thm}
	In this section, we will analyze the approximation error of Algorithm \ref{alg-3} and prove Theorem \ref{main-thm}. To begin with, we need some definitions to illustrate our points.
	\begin{definition}[Sub-strategy]
		In the strategy, we usually adaptively insert a series of items with the same type. We call this process sub-strategy.
		
		Let $\text{Weight}(s)$ denotes the total weight of items inserted in a sub-strategy $s$, noting that $\text{Weight}(s)$ is a random variable. To be convenient, we use $DistW(s)$ to denote $Dist(\text{Weight}(s))$.
		
		Define the expected cost of sub-strategy $s$ as the expected total cost of items inserted in a sub-strategy $s$.
	\end{definition}
	
	\begin{remark}
		We can view the sub-strategy as a series of functions $\{f_{i,p}:\mathbb Z^{i-1}\to [0,1]\}$, where $f_{i,p}$ denotes the choice function of the $i^{th}$ item in the sub-strategy. The $i-1$ parameters in $f_{i,p}$ denote the realized weights of the previous items of type $p$, and the output denote the probability to insert a new item of type $p$, otherwise, we do not insert a new item and the whole sub-strategy is stopped. As for the decision of the $i^{th}$ item, we randomly choose an action based on the output of $f_{i,p}$.
	\end{remark}
	For example, suppose an item of a certain type will weigh $2$ or $3$ with the same probability $0.5$, and the cost of one item is $1$. One sub-strategy (denoted by $s$) is: First insert one item. If it weighs 2, then we insert another and stop, otherwise, it will stop immediately. The expected cost of the sub-strategy is $1.5$, and the distribution array of $W(s)$ is $(1,1,1,0.5,0.25,0,0,\dots,0)$, since the total weight is $3$ with probability $0.5$, $4$ with probability $0.25$ and $5$ with probability $0.25$.
	\begin{definition}
		Given two distribution arrays $\text{Dist}(A)=(A[1],A[2],\dots,A[W])$ and $\text{Dist}(B)=(B[1],B[2],\dots,B[W])$, if for all $i$, $A[i]\geq B[i]$, then we use $\text{Dist}(A)\geq \text{Dist}(B)$ to denote this relation.
	\end{definition}
	\begin{definition}
		Given one distribution array $\text{Dist}(A)=(A[1],A[2],\dots,A[W])$ and one positive real number $\lambda$, define
		$$\lambda \cdot \text{Dist}(A):=(\min(1,\lambda A[1]),\min(1,\lambda A[2]),\dots,\min(1,\lambda A[W])).$$
	\end{definition}
	\begin{lemma}\label{jihe}
		Suppose that there is a sub-strategy $s$ with expected cost $c_s$ and distribution array $\text{DistW}(s)=(s[1],s[2],\dots,s[W])$. Then for any $\lambda>1$, there exists a sub-strategy $t$ with expected cost $\lambda c_s$, and $t[w]\geq \min(1,\lambda s[w])$ for all $w \in [1,W]$, i.e, $\text{DistW}(t)\geq \lambda\cdot \text{DistW}(s)$.
	\end{lemma}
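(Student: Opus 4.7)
The plan is to construct $t$ as a \emph{stop-on-success} adaptive iteration of $s$, exploiting adaptivity rather than parallel replication. I would run independent copies $s_1, s_2, \ldots$ of $s$, producing per-copy weights $Y_1, Y_2, \ldots$, and stop at the first index $K$ with $Y_K \geq w^\ast$, where the threshold $w^\ast$ is chosen so that $s[w^\ast] = 1/\lambda$. (If no integer threshold achieves equality exactly, the stopping rule is randomized at the boundary so that the per-step stopping probability is precisely $1/\lambda$.) Setting $W(t) = Y_1+\cdots+Y_K$ produces a valid sub-strategy by concatenating the choice functions of each $s_k$ into the choice functions of $t$.

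The expected cost drops out immediately: $K$ is geometric with success probability $1/\lambda$, so $\mathbb{E}[K]=\lambda$. Applying Wald's identity (the random cost of each run $s_k$ is i.i.d.\ with mean $c_s$, and $K$ is a stopping time with respect to the natural filtration generated by the runs) gives $\mathbb{E}[\mathrm{cost}(t)]=\mathbb{E}[K]\,c_s=\lambda c_s$.

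For the CCDF bound I would split on the size of the target $w$. If $w\leq w^\ast$, the stopping rule forces $W(t)\geq Y_K\geq w^\ast\geq w$ almost surely, so $t[w]=1\geq \min(1,\lambda s[w])$. If $w>w^\ast$, then conditional on the stopping event, the last weight $Y_K$ is distributed as $Y$ conditioned on $Y\geq w^\ast$, giving
\[
\Pr[Y_K\geq w]=\frac{\Pr[Y\geq w]}{\Pr[Y\geq w^\ast]}=\frac{s[w]}{1/\lambda}=\lambda s[w].
\]
Since $W(t)\geq Y_K$ and $\lambda s[w]\leq \lambda s[w^\ast]=1$, we obtain $t[w]\geq \lambda s[w]=\min(1,\lambda s[w])$, as required.

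The main obstacle, and the reason I need this particular construction, is that the obvious replication schemes all fail. Running $\lceil\lambda\rceil$ independent copies of $s$ (either in parallel or in sequence, without adaptation) only yields $\Pr[\max_i Y_i\geq w]=1-(1-s[w])^{\lceil\lambda\rceil}\leq \lceil\lambda\rceil s[w]$ by Bernoulli's inequality, which is an upper bound pointing in the wrong direction; an unconditional geometric restart wastes cost after success has already occurred and similarly falls short. The crucial feature of the construction above is that the stopping decision is adaptive, based on the latest single run: cost is paid only until some $Y_K\geq w^\ast$ appears, which size-biases the distribution of $Y_K$ upward and produces exactly the $\lambda s[w]$ amplification we need in the tail.
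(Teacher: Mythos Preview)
Your proof is correct and follows the same idea as the paper: repeatedly execute independent copies of $s$ and stop as soon as the last run's weight lands in the top $1/\lambda$ tail (with boundary randomization so the per-run stopping probability is exactly $1/\lambda$), then lower-bound the total weight by $Y_K$ alone. The paper packages your threshold-plus-randomization device as a ``quality factor'' $r_s$---a uniform $(0,1]$ variable coupled to the realized weight via the inverse CCDF, with stopping rule $r_s\le 1/\lambda$---but the construction, the geometric cost computation, and the two-case CCDF analysis are identical.
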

	\begin{proof}
		
		Given a sub-strategy $s$ and the distribution array $\text{DistW}(s)=(s[1],s[2],\dots,s[W])$, supposing $w$ denote the realized total weight of the sub-strategy $s$, define random variable $r_s$ as the quality factor of $s$, where $r_s$ is uniformly distributed in $\left( s[w+1],s[w] \right]$ given $w$. (Here, we denote $s[W+1]=0$.)
		
		So, the quality factor $r_s$ is uniformly distributed in $(0,1]$ without knowing $w$. Further more, the larger $w$ is, the smaller $r_s$ is.
		
		We build a sub-strategy $t$ given as follow:
		
		We first execute sub-strategy $s$. If the quality factor $r_s$ satisfies $r_s \leq \frac{1}{\lambda}$, then stop, otherwise we repeat executing sub-strategy $s$ until the quality factor satisfies $r_s\leq \frac{1}{\lambda}$.
		
		Let $c_s$ and $c_t$ denote the expected cost of $s$ and $t$ respectively. They should satisfy $c_t=c_s+(1-\frac{1}{\lambda})c_t$, which implies $c_t=\lambda c_s$.
		
		Then we prove that for all $w$, we have $t[w]\geq \min(1,\lambda s[w])$. If $s[w] > \frac{1}{\lambda}$, because we construct sub-strategy $t$ such that we will get a realized $s$ with the quality factor $\leq \frac{1}{\lambda}$, then in the last time we execute $s$, total weight in $s$ must reach $w$, so $t[w]= 1$. If $s[w]\leq\frac{1}{\lambda}$, then we also consider the last time we execute $s$. The quality factor is at most $\frac{1}{\lambda}$ in this case, so the probability that the total weight is greater than $w$ given the quality factor is at most $\frac{1}{\lambda}$ is $\lambda s[w]$.\qed
	\end{proof}
	\begin{lemma}\label{round-error}
		Given a sub-strategy $s$ whose expected cost is $c_s$, suppose the $\eta$-approximate array of $DistW(s)$ is $D'$, and the $(\zeta,\eta)$-approximate array of $DistW(s)$ is $D''$. Then there is a sub-strategy $t_1$ whose expected cost is $(1+\eta)c_s$ and $\text{DistW}(t_1)\geq D'$. And there is a sub-strategy $t_2$ whose expected cost is at most $(1+\eta)^2 c_s$ and $\text{DistW}(t_2)\geq D''$.
	\end{lemma}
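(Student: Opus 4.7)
My plan has two parts, corresponding to the two claims of the lemma.

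For the first claim ($t_1$), I would argue that the $\eta$-approximate array $D'$ satisfies $D'[w] \le \min(1,(1+\eta)\,\text{DistW}(s)[w])$ entrywise: when $\text{DistW}(s)[w] > (1+\eta)^{-\zeta}$, the rounding $(1+\eta)^{\lceil\log_{1+\eta}\text{DistW}(s)[w]\rceil}$ differs from $\text{DistW}(s)[w]$ by at most a factor $(1+\eta)$, and otherwise the entry is interpreted as $0$ (or any value bounded by $(1+\eta)\,\text{DistW}(s)[w]$). Applying Lemma \ref{jihe} to $s$ with $\lambda=1+\eta$ then immediately produces a sub-strategy $t_1$ of expected cost exactly $(1+\eta)c_s$ with $\text{DistW}(t_1)[w]\ge \min(1,(1+\eta)\,\text{DistW}(s)[w])\ge D'[w]$.

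The second claim ($t_2$) is harder because $D''$ enforces a uniform floor of $(1+\eta)^{-\zeta}$ on every index, including indices $w$ where $\text{DistW}(s)[w]=0$; no multiplicative scaling of $s$ alone (which is all Lemma \ref{jihe} provides) can give positive mass at such indices. My plan is a two-stage construction. First, I build an intermediate sub-strategy $\hat s$ whose distribution array is pointwise at least $\max(\text{DistW}(s)[w],(1+\eta)^{-\zeta-1})$ by probabilistic mixing: with probability $1-q$ I run $s$, and with probability $q:=(1+\eta)^{-\zeta-1}$ I run a filler sub-strategy $L$ that keeps inserting items of the same type until the accumulated weight reaches $W$. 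Since each item has positive integer weight, $L$ inserts at most $W$ items and costs at most $Wc_p$. The mixture gives $\text{DistW}(\hat s)[w]=(1-q)\text{DistW}(s)[w]+q\ge \max(\text{DistW}(s)[w],q)$, and its expected cost is bounded by
\[
(1-q)c_s+qWc_p\;\le\; c_s+\frac{\eta\, c_p}{1+\eta}\;\le\; \Bigl(1+\frac{\eta}{1+\eta}\Bigr)c_s\;\le\;(1+\eta)c_s,
\]
using $c_p\le c_s$, which holds in the intended use inside Algorithm \ref{alg-2} since each sub-strategy $S_{2^i}$ inserts at least one item of type $p$.

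Second, I apply Lemma \ref{jihe} with $\lambda=1+\eta$ to $\hat s$, producing a sub-strategy $t_2$ of expected cost $(1+\eta)c_{\hat s}\le(1+\eta)^2 c_s$ with $\text{DistW}(t_2)[w]\ge \min(1,(1+\eta)\,\text{DistW}(\hat s)[w])$. A two-case check then verifies $\text{DistW}(t_2)\ge D''$: when $\text{DistW}(s)[w]>(1+\eta)^{-\zeta}$, $(1+\eta)\,\text{DistW}(\hat s)[w]\ge (1+\eta)\,\text{DistW}(s)[w]\ge D''[w]$; when $\text{DistW}(s)[w]\le(1+\eta)^{-\zeta}$, $(1+\eta)\,\text{DistW}(\hat s)[w]\ge(1+\eta)\cdot q=(1+\eta)^{-\zeta}=D''[w]$, and the clamp at $1$ is never tight below $D''[w]\le 1$. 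The main obstacle is the delicate balance in the cost analysis: the mixing weight $q=\Theta(\eta/W)$ must exactly cancel the $O(W c_p)$ worst-case cost of the filler $L$, and the resulting $\eta c_p$ overhead only fits inside the $(1+\eta)^2 c_s$ budget once one invokes the contextual bound $c_p\le c_s$; outside that regime the lemma's bound would have to be weakened.
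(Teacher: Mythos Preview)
Your argument is correct and follows essentially the same approach as the paper. The only cosmetic difference is in the $t_2$ construction: the paper first builds $t_1$ and then, whenever $t_1$'s quality factor exceeds $1-(1+\eta)^{-\zeta}$, appends $W$ extra items of type $p$, giving expected cost $(1+\eta)c_s+(1+\eta)^{-\zeta}Wc_s\le(1+\eta)^2c_s$; you instead mix the filler into $s$ first and then apply Lemma~\ref{jihe}, which is the same ``scale plus small-probability filler'' idea with the two steps reversed, and your explicit appeal to $c_p\le c_s$ is exactly what the paper's bound $Wc_s$ on the filler cost encodes.
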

	\begin{proof}
		First, $D'\leq (1+\eta)\text{DistW}(s)$ from the definition of $\eta$-approximation. Then from Lemma \ref{jihe}, sub-strategy $t_1$ exists.
		
		Then we construct sub-strategy $t_2$. The sub-strategy is: Execute sub-strategy $t_1$ first. If its quality factor is larger than $1-(1+\eta)^{-\zeta}$, insert $W$ more items of type $p$ to ensure that the total weight reaches $W$. Then $\text{DistW}(t_2)\geq D''$ simply follows from the definition of sub-strategy $t_2$. The expected cost of $t_2$ is
		\[ c_{t_2}=(1+\eta) c_s + (1+\eta)^{-\zeta}W c_s\leq (1+\eta)^2 c_s. \]
		\qed
	\end{proof}
	\begin{lemma}
		Given two variables $X$ and $Y$ which satisfy $\text{Dist}(X)\geq \text{Dist}(Y)$, and another variable $Z$ with distribution array $\text{Dist}(Z)$, then $\text{Dist}(X+Z)\geq \text{Dist}(Y+Z)$.
	\end{lemma}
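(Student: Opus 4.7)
The plan is to unwind the definitions and verify the pointwise inequality $(X+Z)[w] \ge (Y+Z)[w]$ for each $w \in [1,W]$ by conditioning on the realization of $Z$. Throughout, I will treat $Z$ as independent of $X$ (resp.\ $Y$), since in the paper's usage (sums of weights of items of different types) the joint distribution is precisely the product of the marginals, and only the marginals are constrained by the hypothesis.

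First I would write
\[
\Pr[X+Z \ge w] \;=\; \sum_{z \ge 1} \Pr[Z = z]\,\Pr[X \ge w-z],
\]
and the analogous identity for $Y+Z$. By Proposition~\ref{positive-support-assumption} we may assume every random variable of interest has positive integer support, so the sum on the right ranges over $z \ge 1$.

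Next I would fix $w \in [1,W]$ and $z \ge 1$ and argue $\Pr[X \ge w-z] \ge \Pr[Y \ge w-z]$ by a short case split. If $z \ge w$, then $w-z \le 0$ and both probabilities equal $1$, since $X$ and $Y$ are supported on positive integers. If $z < w$, then $1 \le w-z \le w \le W$, so $w-z$ is a valid index of the distribution array, and the hypothesis $\text{Dist}(X) \ge \text{Dist}(Y)$ gives $X[w-z] \ge Y[w-z]$, which is exactly the desired inequality. Summing over $z$ weighted by $\Pr[Z=z]$ and invoking the identity above yields $(X+Z)[w] \ge (Y+Z)[w]$ for every $w \in [1,W]$, which is the definition of $\text{Dist}(X+Z) \ge \text{Dist}(Y+Z)$.

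There is no real obstacle here; the lemma is a routine stochastic-dominance-is-preserved-under-convolution statement. The only thing to be careful about is the range of the index $w-z$: because the distribution array only records $X[i]$ for $i \in [1,W]$, we must verify separately that the inequality still holds when $w-z$ falls outside this range, which is why the boundary case $z \ge w$ (where both sides are trivially $1$) must be singled out. The positive integer support assumption guaranteed by Proposition~\ref{positive-support-assumption} is what makes this boundary case clean.
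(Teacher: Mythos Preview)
Your proof is correct and follows essentially the same route as the paper's. The paper writes the convolution via the identity $(X+Z)[w]=Z[w]+\sum_{j=1}^{w-1}(Z[j]-Z[j+1])\,X[w-j]$ (the analogue of Equation~\eqref{con-ccdf}) and then replaces $X[w-j]$ by $Y[w-j]$ termwise; since $Z[j]-Z[j+1]=\Pr[Z=j]$ and $Z[w]=\sum_{z\ge w}\Pr[Z=z]$, this is exactly your conditioning-on-$Z$ computation with the boundary case $z\ge w$ absorbed into the leading $Z[w]$ term.
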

	\begin{proof}
		For all $w$, similar to Equation \ref{con-ccdf}, we have
		\begin{align*}
		(X+Z)[w] =& Z[w] + \sum_{j=1}^{w-1}\left( \left(Z[j]-Z[j+1]\right)\cdot X[w-j] \right)\\
		\geq& Z[w] + \sum_{j=1}^{w-1}\left( \left(Z[j]-Z[j+1]\right)\cdot Y[w-j] \right) \\
		=& (Y+Z)[w].
		\end{align*}
		\qed
	\end{proof}
	\begin{corollary}\label{order-convo}
		Given two variables $X_1, X_2$ with the same distribution $\text{Dist}(X)$ and two variables $Y_1,Y_2$ with the same distribution $\text{Dist}(Y)$, if $\text{Dist}(X)\geq \text{Dist}(Y)$, then $\text{Dist}(X_1+X_2)\geq \text{Dist}(X_1+Y_1) \geq \text{Dist}(Y_1+Y_2)$.
	\end{corollary}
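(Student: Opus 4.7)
The plan is to derive Corollary \ref{order-convo} directly from the preceding lemma by chaining two applications of it, each handling one link in the inequality chain. The corollary is essentially a ``hybrid argument'': start with $X_1+X_2$, swap out $X_2$ for $Y_1$ using the lemma to get $X_1+Y_1$, and then swap out $X_1$ for $Y_2$ using the lemma again to get $Y_1+Y_2$. The substitution is valid because the lemma's conclusion $\text{Dist}(X+Z)\geq \text{Dist}(Y+Z)$ only requires that the swapped variable dominates stochastically, and that the unchanged variable $Z$ is treated identically on both sides.

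For the first inequality $\text{Dist}(X_1+X_2)\geq \text{Dist}(X_1+Y_1)$, I would apply the preceding lemma with $Z:=X_1$, with $X_2$ playing the role of $X$, and $Y_1$ playing the role of $Y$. The hypothesis of the lemma is satisfied because $\text{Dist}(X_2)=\text{Dist}(X)\geq \text{Dist}(Y)=\text{Dist}(Y_1)$ by assumption. This yields the first claimed inequality. Symmetrically, for the second inequality $\text{Dist}(X_1+Y_1)\geq \text{Dist}(Y_1+Y_2)$, I would apply the lemma with $Z:=Y_1$, now using $X_1$ in place of $X$ and $Y_2$ in place of $Y$; the required domination $\text{Dist}(X_1)\geq \text{Dist}(Y_2)$ again follows from the assumption, and commutativity of addition gives $\text{Dist}(Y_2+Y_1)=\text{Dist}(Y_1+Y_2)$.

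Before stringing these two steps together, I would make the (implicit) independence assumption on $X_1,X_2,Y_1,Y_2$ explicit, since the preceding lemma is implicitly a statement about convolutions of independent distributions — the recursion it rests on, a variant of Equation (\ref{con-ccdf}), assumes independence. In the context of the corollary's use within Algorithm \ref{alg-2}, all the variables in question arise as sums of independent items of a single type, so this assumption is automatic. There is no genuine obstacle here: once the preceding lemma is in hand, the corollary is a two-line hybrid argument, and the only care required is to pick the right variable to hold fixed in each application of the lemma so that the two inequalities compose into the stated chain.
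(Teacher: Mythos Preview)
Your proposal is correct and matches the paper's intent: the corollary is stated without proof immediately after the lemma, so the paper treats it as an immediate consequence, and your two-step application of the lemma (first with $Z=X_1$, then with $Z=Y_1$) is exactly the intended argument. Your remark about the implicit independence assumption is a fair clarification but not a departure from the paper's approach.
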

	Define sub-strategy $A_{2^i}$ which always use $2^i$ items of type $p$, and define random variable $S_{2^i}$ as the total weight of $2^i$ items of type $p$. Then we have $\text{DistW}(A_{2^i})=\text{Dist}(S_{2^i})$. Recalling the variable $B_{2^i}$ in our algorithm, when computing $B_{2^i}$, we always round the value larger (from the definition of $(\zeta,\eta)$-approximation). So, from the induction argument, $\text{Dist}(B_{2^i})\geq \text{DistW}(A_{2^i})$ for all $i$. We restate the argument into the following corollary.
	
	\begin{corollary}\label{bounded-below}
		For all $i$ such that $2^i \le e_p$,
		\[\text{Dist}(B_{2^i})\geq \text{DistW}(A_{2^i}) = \text{Dist}(S_{2^i}).\]
	\end{corollary}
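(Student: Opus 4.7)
The plan is to prove the corollary by induction on $i$, chaining together two pointwise-domination facts: (a) any $(\zeta,\eta)$-approximation of a distribution array is entrywise at least the original array, and (b) Corollary \ref{order-convo} propagates pointwise domination through independent convolutions. Both ingredients are already in place by the time the corollary is stated, so the proof is essentially the formal verification that these two facts can be stitched together.

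For the base case $i=0$, we have $2^i=1$, so $B_1$ is directly defined as the $(\zeta,\eta)$-approximation of $S_1=X_p$. Inspecting the definition of $(\zeta,\eta)$-approximate array, each entry $a'_j$ is either $(1+\eta)^{\lceil \log_{1+\eta} a_j \rceil}\ge a_j$ (when $a_j>(1+\eta)^{-\zeta}$) or $(1+\eta)^{-\zeta}\ge a_j$ (when $a_j\le (1+\eta)^{-\zeta}$). Either way $a'_j\ge a_j$, yielding $\text{Dist}(B_1)\ge \text{Dist}(S_1)$, which is fact (a) specialized to this case.

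For the inductive step, assume $\text{Dist}(B_{2^{i-1}})\ge \text{Dist}(S_{2^{i-1}})$. Because $S_{2^i}$ is (in distribution) the sum of two independent copies of $S_{2^{i-1}}$ while $B'_{2^i}$ is the sum of two independent copies of $B_{2^{i-1}}$, Corollary \ref{order-convo} immediately gives $\text{Dist}(B'_{2^i})\ge \text{Dist}(S_{2^i})$. One more application of (a), this time to $B_{2^i}$ being the $(\zeta,\eta)$-approximation of $B'_{2^i}$, yields $\text{Dist}(B_{2^i})\ge \text{Dist}(B'_{2^i})$, and transitivity closes the induction.

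The argument is routine and I do not anticipate any real obstacle; the only spot that requires care is verifying that the $(\zeta,\eta)$-rounding rule is monotone upwards in \emph{both} regimes (the multiplicative ceiling branch and the probability-floor branch), since only then does step (a) hold uniformly over all coordinates. Once this is checked, nothing else in the argument depends on the particular value of $\eta$ or $\zeta$, so the same induction works for every $i$ with $2^i\le e_p$.
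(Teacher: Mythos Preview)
Your proof is correct and follows the same approach as the paper: the paper simply remarks that the $(\zeta,\eta)$-approximation always rounds upward and that an induction (implicitly via Corollary~\ref{order-convo}) then yields the claim, which is exactly the two-ingredient argument you spell out. Your write-up is in fact more detailed than the paper's one-line justification, including the explicit check that both branches of the $(\zeta,\eta)$-rounding are monotone upward.
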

	
	Now we want to claim that $B_{2^i}$ does not deviate too much from $A_{2^i}$. Let $c_p$ denote the cost of one item of type $p$. We have the following lemma.
	\begin{lemma}\label{bounded-above}
		For all $i \le \log_2 e_p$, there is a sub-strategy $C_{2^i}$, whose expected cost is at most $(1+\eta)^{2(i+1)} 2^i c_p$, and $\text{DistW}(C_{2^i})\geq \text{Dist}(B_{2^i})$.
	\end{lemma}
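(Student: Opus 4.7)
The plan is induction on $i$, constructing $C_{2^i}$ explicitly from $C_{2^{i-1}}$ by first composing two independent copies and then invoking Lemma~\ref{round-error}. Throughout, $C_{2^i}$ will be an ``honest'' sub-strategy using only genuine type-$p$ items, whose weight distribution dominates the algorithm's rounded variable $B_{2^i}$ while paying a multiplicative factor of $(1+\eta)^2$ in expected cost per doubling level.

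For the base case $i=0$, $B_1$ is by definition the $(\zeta,\eta)$-approximation of $S_1$, the weight of a single type-$p$ item of cost $c_p$. Applying Lemma~\ref{round-error} directly to this trivial one-item sub-strategy yields $C_1$ with expected cost at most $(1+\eta)^2 c_p = (1+\eta)^{2(0+1)} \cdot 2^0 c_p$ and $\text{DistW}(C_1) \geq \text{Dist}(B_1)$. For the inductive step, I would let $C'_{2^i}$ be the sub-strategy that sequentially runs two independent copies of $C_{2^{i-1}}$; its expected cost is at most $2 \cdot (1+\eta)^{2i} \cdot 2^{i-1} c_p = (1+\eta)^{2i} \cdot 2^i c_p$. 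Two applications of Corollary~\ref{order-convo} to the inductive hypothesis $\text{DistW}(C_{2^{i-1}}) \geq \text{Dist}(B_{2^{i-1}})$ yield $\text{DistW}(C'_{2^i}) \geq \text{Dist}(B'_{2^i})$. Finally, applying Lemma~\ref{round-error} to $C'_{2^i}$ produces $C_{2^i}$ with expected cost at most $(1+\eta)^2 \cdot (1+\eta)^{2i} \cdot 2^i c_p = (1+\eta)^{2(i+1)} \cdot 2^i c_p$, whose distribution array dominates the $(\zeta,\eta)$-approximate array of $\text{DistW}(C'_{2^i})$.

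The main obstacle is closing the final inequality $\text{DistW}(C_{2^i}) \geq \text{Dist}(B_{2^i})$: the construction above yields a sub-strategy dominating the $(\zeta,\eta)$-approximation of $C'_{2^i}$, whereas $B_{2^i}$ is the $(\zeta,\eta)$-approximation of $B'_{2^i}$. The bridge is a short monotonicity observation, which I would record explicitly as a small preliminary claim: if $D \geq D'$ componentwise, then the $(\zeta,\eta)$-approximate array of $D$ dominates that of $D'$, because $x \mapsto (1+\eta)^{\lceil \log_{1+\eta} x \rceil}$ is non-decreasing in $x$ and clipping at the floor value $(1+\eta)^{-\zeta}$ preserves order. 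Combined with $\text{DistW}(C'_{2^i}) \geq \text{Dist}(B'_{2^i})$ established in the inductive step, this closes the argument. Beyond this monotonicity point, the proof is careful bookkeeping of the $(1+\eta)^2$ factors introduced by each invocation of Lemma~\ref{round-error}, which exactly accounts for the exponent $2(i+1)$ in the statement.
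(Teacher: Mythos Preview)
Your proposal is correct and follows the same inductive scheme as the paper: base case via Lemma~\ref{round-error} applied to a single item, inductive step by running two copies of $C_{2^{i-1}}$, invoking Corollary~\ref{order-convo}, and then Lemma~\ref{round-error} again. In fact you are more careful than the paper on one point: the paper simply asserts that Lemma~\ref{round-error} applied to $C'_{2^k}$ yields $\text{DistW}(C_{2^k})\geq \text{Dist}(B_{2^k})$, whereas strictly that lemma only guarantees domination over the $(\zeta,\eta)$-approximate array of $\text{DistW}(C'_{2^k})$; your explicit monotonicity observation for the $(\zeta,\eta)$-approximation map is exactly what is needed to bridge to $\text{Dist}(B_{2^k})$, and the paper leaves this implicit.
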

	\begin{proof}
		$\text{Dist}(B_{2^0}) = \text{Dist}(B_1)$ is the $(\zeta,\eta)$-approximate array of $\text{DistW}(A_{2^0})$, from Lemma \ref{round-error}, $C_{2^0}$ exists.
		
		Now, assume that for all $i<k$, $C_{2^i}$ exists, and we prove $C_{2^k}$ exists. Recall that $B'_{2^k}$ is the summation of $B_{2^{k-1}}$ and $B_{2^{k-1}}$, and we first construct sub-strategy $C'_{2^{k}}$ which simply executes $C_{2^{k-1}}$ twice. Then the expected cost of $C'_{2^{k}}$ is $(1+\eta)^{2k} 2^k c_p$, and from Corollary \ref{order-convo}, we know $\text{DistW}(C'_{2^{k-1}})\geq \text{Dist}(B'_{2^{k}})$. Recall $B_{2^{k}}$ is the $(\zeta,\eta)$-approximation of $B'_{2^{k}}$. From Lemma \ref{round-error}, we know there exists $C_{2^{k}}$ whose expected cost is $(1+\eta)^{2(k+1)} 2^k c_p$, and $\text{DistW}(C_{2^{k}})\geq \text{Dist}(B_{2^{k}})$. Then we complete the proof by induction.\qed
	\end{proof}
	
	We can view the sub-strategy as a new type of item with cost following a distribution, and we know the expected cost of the sub-strategy. From the previous argument, we know that the distribution array of $B_{2^i}$ is bounded by $2$ other distribution array $A_{2^i}$ and $C_{2^i}$, whose expected costs are `close to each other.' We show that our computation leads to small error compared with the optimal value.
	\begin{proof}[Proof of Theorem \ref{main-thm}]
		First, $OPT_W$ is the optimal value for \prob, and $OPT_W^{*}$ is the optimal value such that for each type $p$ with $c_p < \theta T$, we use a multiple of $e_p$ number of items of type $p$ together. Then from Theorem \ref{bundle-err}, we have
		\[OPT_W \le {OPT}_{W}^{*} \leq OPT_{W}+ \frac{\epsilon T}{5}.\]
		Let $OPT^{**}_W$ denote the optimal value with items $\{(c_pe_p, Y_p)\}_{p\le n}$, where $c_pe_p$ is the expected cost of the item and $\text{Dist}(Y_p)$ is the weight distribution array. We next show that
		\[\frac{1}{(1+\eta)^{2\log_2 W +2}}OPT_W \le OPT^{**}_W \le OPT^*_W.\]
		We first show that $OPT^{**}_W \le OPT^*_W$. Note that $OPT^*_W$ is the optimal value with items $\{(c_pe_p, S^p_{e_p})\}_{p\le n}$, and from Corollary \ref{bounded-below}, we have $\text{Dist}(Y_p) \ge \text{Dist}(S^p_{e_p})$. First, it is obvious that $OPT^*_w$ and $OPT^{**}_w$ are non-decreasing with respect to $w$. Then let $OPT^*_w = OPT^{**}_w = 1,\forall w\le 0$, and suppose for all $w\le k$, $OPT^{**}_w \le OPT^*_w$. Then for $w = k+1$,
		\begin{align*}
		OPT^*_w =& \min_{i}\left\{c_ie_i + \sum_{j=1}^{W-1} (S^i_{e_i}[j] - S^i_{e_i}[j+1])OPT^*_{w-j} + S^i_{e_i}[W]OPT^*_{w-W} \right\} \\
		\ge& \min_{i}\left\{c_ie_i + \sum_{j=1}^{W-1} (S^i_{e_i}[j] - S^i_{e_i}[j+1])OPT^{**}_{w-j} + S^i_{e_i}[W]OPT^{**}_{w-W}\right\} \\
		=& \min_{i}\left\{c_ie_i + S^i_{e_i}[1]OPT^{**}_{w-1} + \sum_{j=2}^{W} S^i_{e_i}[j](OPT^{**}_{w-j} - OPT^{**}_{w-j+1})\right\} \\
		\ge& \min_{i}\left\{c_ie_i + Y_i[1]OPT^{**}_{w-1} + \sum_{j=2}^{W} Y_i[j](OPT^{**}_{w-j} - OPT^{**}_{w-j+1})\right\} \\
		=& \min_{i}\left\{c_ie_i + \sum_{j=1}^{W-1} (Y_i[j] - Y_i[j+1])OPT^{**}_{w-j} + Y_i[W]OPT^{**}_{w-W}\right\} \\
		=& OPT^{**}_w,
		\end{align*}
		where we use the Abel transformation, the monotonicity of $OPT^{**}_w$, the fact that $\text{Dist}(Y_p) \ge \text{Dist}(S^p_{e_p})$, and the fact that $Y_p[1] = S^p_{e_p}[1] = 1$ (all items have positive integer support).
		
		Then we prove that $OPT_W/(1+\eta)^{2\log_2 W +2} \le OPT^{**}_W$. First from Lemma \ref{bounded-above}, we know that there exists sub-strategies $\{(c'_p,Z_p)\}_{p\le n}$ such that  $c'_p \le (1+\eta)^{2\log_2 e_p + 2}e_pc_p$ and $\text{DistW}(Z_p) \ge \text{Dist}(Y_p)$, where $Y_p$ is the output distribution by Algorithm \ref{alg-2}. Note that $e_p \le W$, otherwise we have
		\[e_pc_p > Wc_p \ge OPT_W \ge 2\theta T \ge e_pc_p,\]
		which is impossible. Then we have $c'_p \le (1+\eta)^{2\log_2 W + 2}e_pc_p$. Note that a sub-strategy can be viewed as a type of item with cost following a distribution, and let $OPT'_w$ denote the optimal value with total weight $w$ and types of items $\{(c'_p,Z_p)\}_{p\le n}$. Without loss of generality, define $OPT'_w = 1,\forall w\le 0$. Then first, we have $OPT^{**}_w \ge OPT'_w / (1+\eta)^{2\log_2 W +2} ,\forall w\le 0$. Suppose for all $w\le k$, $OPT^{**}_w \ge OPT'_w / (1+\eta)^{2\log_2 W +2}$. Then for $w = k+1$,
		\begin{align*}
		OPT^{**}_w =& \min_{i}\left\{c_ie_i + \sum_{j=1}^{W-1} (Y_i[j] - Y_i[j+1])OPT^{**}_{w-j} + Y_i[W]OPT^{**}_{w-W}\right\} \\
		\ge& \min_{i}\Bigg\{c_ie_i + \frac{1}{(1+\eta)^{2\log_2 W +2}}\sum_{j=1}^{W-1} (Y_i[j] - Y_i[j+1])OPT'_{w-j} \\
        &\quad\quad + \frac{1}{(1+\eta)^{2\log_2 W +2}}Y_i[W]OPT'_{w-W}\Bigg \} \\
		\ge& \frac{1}{(1+\eta)^{2\log_2 W +2}}\min_{i}\Bigg\{ c'_i + Y_i[1]OPT'_{w-1} \\&\quad\quad+ \sum_{j=2}^{W} Y_i[j](OPT'_{w-j} - OPT'_{w-j+1})\Bigg\} \\
		\ge& \frac{1}{(1+\eta)^{2\log_2 W +2}}\min_{i}\Bigg\{c'_i + Z_i[1]OPT'_{w-1} \\&\quad\quad+ \sum_{j=2}^{W} Z_i[j](OPT'_{w-j} - OPT'_{w-j+1})\Bigg\} \\
		=& \frac{1}{(1+\eta)^{2\log_2 W +2}}\min_{i}\Bigg\{c'_i \\&\quad\quad+ \sum_{j=1}^{W-1} (Z_i[j] - Z_i[j+1])OPT'_{w-j} + Y_i[W]OPT'_{w-W}\Bigg\} \\
		=& \frac{1}{(1+\eta)^{2\log_2 W +2}}OPT'_w,
		\end{align*}
		where we use the Abel transformation, the monotonicity of $OPT'_w$, the fact that $\text{DistW}(Z_p) \ge \text{Dist}(Y_p)$, the fact that $Y_p[1] = Z_p[1] = 1$(all items have positive integer support) and the induction assumption. Then we prove $OPT'_W/(1+\eta)^{2\log_2 W +2} \le OPT^{**}_W$ by induction. Also notice that $OPT'_W$ is the optimal value using the sub-strategies $\{(c'_p,Z_p)\}_{p\le n}$, and each sub-strategy can be constructed through the original type of items, so it is clear that $OPT_W \le OPT'_W$. Then we have proved that
		\begin{align}\label{bound-appro-dist}
		\frac{1}{(1+\eta)^{2\log_2 W +2}}OPT_W \le OPT^{**}_W \le OPT^*_W.
		\end{align}
		
		Finally, from Theorem \ref{dp-err}, we can get
		
		\[(1-\delta)(1-\frac{\epsilon}{10})\hat{V} \leq OPT^{**}_{W} \leq \hat{V}.\]
		
		According to Theorem \ref{bundle-err} and Equation \ref{bound-appro-dist}, we can get
		\begin{align*}
		\frac{1}{(1+\eta)^{2\log_2 W +2}}OPT_W \leq &\hat{V} \leq  (1-\delta)^{-1}(1-\frac{\epsilon}{10})^{-1} \left( OPT+\frac{\epsilon T}{5} \right).
		\end{align*}
		
		So we have
		
		\begin{align*}
		(1-\epsilon)OPT_W \leq &\hat{V} \leq  (1+\epsilon)OPT_W.    \\
		\end{align*}
		\qed
	\end{proof}
	
	\section{Proof of Theorem \ref{time-complexity}}[Proof of Theorem \ref{time-complexity}]\label{proof-time-complexity}
	\begin{proof}
		Recall the value of parameters we select:
		\[\delta=\frac{\epsilon^2}{100n}\sim O\left(\frac{\epsilon^2}{n}\right),\]
		\[\eta=\frac{\epsilon}{10\log{W}}\sim O\left(\frac{\epsilon}{\log W}\right),\]
		and
		\[\zeta = \log_{1+\eta}{\frac{W}{\eta}}\sim O\left(\frac{\log^2 W}{\epsilon}+\frac{\log W\log{\frac{1}{\epsilon}}}{\epsilon}\right).\]
		
		Without loss of generality, we assume $\frac{1}{\epsilon}$ is $o(W)$. Otherwise, we can just use the pseudo-polynomial dynamic programming algorithm to compute the optimal value. Then,
		\[\zeta\sim O\left(\frac{\log^2 W}{\epsilon}\right).\]
		
		To compute the distribution of the summation of some i.i.d variables, for each type $p$, we compute approximate distribution array of $S_1,S_2,S_4,S_8,\dots,S_{e_p}$ where $e_p$ is $O(W)$, which means we need to compute $O(\log{W})$ distribution arrays. We compute the $(\zeta,\eta)$-approximate array, and get $(\zeta,\eta)$-approximate array of $S_1$ from the oracle by binary search which can be computed in $O(\zeta\log \zeta)$ time. And we compute $(\zeta,\eta)$-approximate array of $S_2,S_4,S_8,\dots,S_{e_p}$ by convolution in $O(\zeta^2\log\zeta)$ time. We also need to compute the approximate distribution array of at most $n$ types of items, so the total time is $O\left(n\zeta^2\log \zeta\log W\right)$
		
		In the dynamic programming, we compute $f_x$ where $x$ takes $\frac{1}{\delta}$ different values. When computing $f_x$, we need to use binary search on a value $y$, so we compute the approximate value of $OPT_y$ in $O(\frac{1}{\delta})$ time. We also need to enumerate which item type to select, so the total time is $O\left(n\frac{1}{\delta^2}\log{W}\right)$.
		
		Besides, we need to compute an approximate value $T$ by our constant factor approximation algorithm, which takes $O(n\log{W})$ time.
		
		In conclusion, the total time complexity is
		$$O\left( \frac{n\log^6 W}{\epsilon^3}+\frac{n^3\log W}{\epsilon^4} \right).$$
		
		So, Algorithm \ref{alg-3} runs in polynomial time.\qed
	\end{proof}

\end{document}